\documentclass{amsart}[11pt]
\setlength{\hoffset}{-1in}\hoffset-1in
\setlength{\textwidth}{17cm}
\usepackage{amsmath, amsfonts, amsthm, amssymb,mathtools}
\usepackage{subfigure}
\usepackage{caption} 
\captionsetup[table]{skip=10pt}
\usepackage{stmaryrd}
\usepackage{verbatim}
\usepackage{hyperref}
\usepackage{color}
\usepackage{ulem}
\usepackage{enumerate}
\linespread{1.3}

\numberwithin{equation}{section}
\topmargin = 0pt
\voffset = -10pt
\addtolength{\textheight}{2cm}
\newtheorem{theorem}{Theorem}[section]

\newtheorem{lemma}[theorem]{Lemma}
\newtheorem{proposition}[theorem]{Proposition}

\theoremstyle{definition}
\newtheorem{definition}[theorem]{Definition}
\newtheorem{remark}[theorem]{Remark}
\newtheorem{algorithm}[theorem]{Algorithm}
\newtheorem{assumption}[theorem]{Assumption}


\newcommand{\QQ}{\mathbb{Q}}
\newcommand{\RR}{\mathbb{R}}

\newcommand{\PP}{\mathbb{P}}
\newcommand{\VV}{\mathbb{V}}
\newcommand{\NN}{\mathbb{N}}
\newcommand{\Fk}{\mathfrak{F}}
\newcommand{\Vk}{\mathfrak{V}}

\newcommand{\D}{\mathrm{d}}
\newcommand{\BSS}{\mathcal{BSS}}
\newcommand{\Bb}{\mathcal{B}}
\newcommand{\Cc}{\mathcal{C}}

\newcommand{\Ee}{\mathcal{E}}
\newcommand{\Ll}{\mathcal{L}}
\newcommand{\Tt}{\mathcal{T}}
\newcommand{\Cr}{\mathrm{C}}
\newcommand{\F}{\mathrm{F}}
\newcommand{\E}{\mathrm{e}}
\newcommand{\BS}{\mathrm{BS}}
\newcommand{\VIX}{\mathrm{VIX}}

\newcommand{\EE}{\mathbb{E}}

\newcommand{\Nn}{\mathcal{N}}
\newcommand{\Vv}{\mathcal{V}}
\newcommand{\Ff}{\mathcal{F}}
\newcommand{\Oo}{\mathcal{O}}
\newcommand{\eps}{\varepsilon}
\newcommand{\tmu}{\tilde{\mu}}
\newcommand{\tsigma}{\tilde{\sigma}}
\newcommand{\Hm}{H_{-}}
\newcommand{\Hp}{H_{+}}

\def\equalDistrib{\,{\buildrel \Delta \over =}\,}

\begin{document}
\title{On VIX Futures in the rough Bergomi model}
\date{\today}
\author{Antoine Jacquier}
\address{Department of Mathematics, Imperial College London}
\email{a.jacquier@imperial.ac.uk}
\author{Claude Martini}
\address{Zeliade Systems, Paris}
\email{c.martini@zeliade.com}
\author{Aitor Muguruza}
\address{Department of Mathematics, Imperial College London}
\email{aitor.muguruza-gonzalez15@imperial.ac.uk}
\thanks{The authors would like to thank Christian Bayer, Jim Gatheral, Mikko Pakkanen and Mathieu Rosenbaum for useful discussions.
AJ acknowledges financial support from the EPSRC First Grant EP/M008436/1.
The numerical implementations have been carried out on the collaborative platform Zanadu (www.zanadu.io).
}
\subjclass[2010]{91G20, 91G99, 91G60, 91B25}
\keywords{Implied volatility, fractional Brownian motion, rough Bergomi, VIX Futures, VIX smile}
\maketitle
\begin{abstract}
The rough Bergomi model introduced by Bayer, Friz and Gatheral~\cite{BFG15} 
has been outperforming conventional Markovian stochastic volatility models 
by reproducing implied volatility smiles in a very realistic manner, in particular for short maturities. 
We investigate here the dynamics of the VIX and the forward variance curve generated by this model,
and develop efficient pricing algorithms for VIX futures and options.
We further analyse the validity of the rough Bergomi model to jointly describe the VIX and the SPX, 
and present a joint calibration algorithm based on the hybrid scheme by Bennedsen, Lunde and Pakkanen~\cite{BLP15}. 
\end{abstract}



\section{Introduction}
Volatility, though not directly observed nor traded, is a fundamental object on financial markets,
and has been the centre of attention of decades of theoretical and practical research,
both to estimate it and to use it for trading purposes.
The former goal has usually been carried out under the historical measure ($\PP$)
while the latter, through the introduction of volatility derivatives (VIX and related family),
has been evolving under the pricing measure~$\QQ$.
Most models used for pricing purposes (Heston~\cite{Heston}, SABR~\cite{SABR}, Bergomi~\cite{Bergomi}) are constructed under~$\QQ$ and are of Markovian nature (making pricing, and hence calibration, easier). 
Recently, Gatheral, Jaisson and Rosenbaum~\cite{Volrough} broke this routine
and introduced a fractional Brownian motion as driving factor of the volatility process. 
This approach (Rough Fractional Stochastic Volatility, RFSV for short) 
opens the door to revisiting classical pricing and calibration conundrums.
They, together with the subsequent paper by Bayer, Friz and Gatheral, (see also~\cite{Alos, Fukasawa}) 
in particular showed that these models were able to capture the extra steepness of the implied volatility smile
in Equity markets for short maturities, which continuous Markovian stochastic volatility models fail to describe.
The icing on the cake is the (at last!!) reconciliation between the two measures~$\PP$ and~$\QQ$ 
within a given model, showing remarkable results both for estimation and for prediction. 

One of the key issues in Equity markets is, not only to fit the (SPX) implied volatility smile, 
but to do so jointly with a calibration of the VIX (Futures and ideally options).
Gatheral's~\cite{G2008} double mean reverting process is the leading (Markovian) continuous model in this direction,
while models with jumps have been proposed abundantly by Carr and Madan~\cite{CM14} 
and Kokholm and Stisen~\cite{KS15}. 
This issue was briefly tackled by Bayer, Friz and Gatheral~\cite{BFG15} for a particular rough model
(rough Bergomi),
and we aim here at providing a deeper analysis of VIX dynamics under this rough model
and at implementing pricing schemes for VIX Futures and options. 
Our main contribution is a precise link between the forward variance curve~$(\xi_T(\cdot))_{T\geq 0}$ 
and the initial forward variance curve~$\xi_0(\cdot)$ in the rough Bergomi model. 
This in turn, allows us not only to provide simulation methods for the VIX,
but also to refine the log-normal approximation of~\cite{BFG15} for VIX Futures, 
matching exactly the first two moments. 
Finally, we develop an efficient algorithm for VIX Futures calibration, 
upon which we build a joint calibration method with the SPX. 
As opposed to the Cholesky approach in~\cite{BFG15}, 
we adapt the hybrid-scheme by Bennedsen, Lunde and Pakkanen~\cite{BLP15} 
with better complexity $\mathcal{O} (n\log n)$ . 
Assuming the universality of the Hurst parameter~$H$ across VIX and SPX 
allows us to compute efficiently prices recursively with complexity $\mathcal{O} (n)$. 
In passing, we also investigate the joint consistency of VIX and SPX in the market.
The organisation of the paper follows accordingly:
we first introduce the rough Bergomi model and its main properties (Section~\ref{sec:rBergomi}), 
before presenting its pricing power for VIX Futures (Section~\ref{sec:VIXFutures}),
and finally develop the joint calibration algorithm in Section~\ref{sec:JointCalib}.


\section{Rough volatility and the rough Bergomi model}\label{sec:rBergomi}
Comte and Renault\cite{Comte-Renault} were the first to propose a stochastic volatility model 
in which the instantaneous volatility is driven by a fractional Brownian motion, 
with a Hurst index restricted to be greater than~$1/2$. 
Recently Gatheral, Jaisson and Rosenbaum~\cite{Volrough} presented a new approach 
with a Hurst index smaller than~$1/2$, 
producing extremely good fits to observed volatility data under the physical measure~$\PP$. 
These models form the so-called Rough Fractional Stochastic Volatility (RFSV) family that is understood as a natural extension of the classical volatility models driven by standard Brownian motion.
Our work focuses on the pricing measure $\QQ$ and we assume through this paper that the model presented by Gatheral, Jaisson and Rosenbaum~\cite{Volrough} under $\PP$ is a reasonable model. 
Finally, and most importantly, we follow the recent paper by Bayer, Friz and Gatheral~\cite{BFG15}, in order to extend the RFSV model to pricing schemes under the measure $\QQ$. 
More precisely, Bayer, Friz and Gatheral~\cite{BFG15} proposed the following model for the log stock price process $X:=\log(S)$:
\begin{equation}\label{eq:rBergomi}
\begin{array}{rll}
\D X_t & = \displaystyle -\frac{1}{2}V_t \D t + \sqrt{V_t}\D W_t, & X_0 = 0\\
V_t & =  \displaystyle \xi_0(t)\Ee(2\nu C_{H}\Vv_t), & V_0>0,
\end{array}
\end{equation}
with $\nu,\xi_0(\cdot)>0$, $\Ee(\cdot)$ is the Dol\'eans-Dade~\cite{Doleans} 
stochastic exponential
and $C_H:=\sqrt{\frac{2H\Gamma(2-\Hp)}{\Gamma(\Hp)\Gamma(2-2H)}}$,
where, for notational convenience (throughout the paper), we use the symbols $H_{\pm} := H\pm \frac{1}{2}$.
All the processes are defined on a given filtered probability space $(\Omega, \Ff, (\Ff_t)_{t\geq 0}, \QQ)$
supporting the two standard Brownian motions~$W$ and~$Z$ (see below).
The initial forward variance curve is observed at inception, 
and we therefore assume without loss of generality that it is $\Ff_0$-measurable.
The process~$\Vv$, defined as 
\begin{equation}\label{eq:Volterra}
\Vv_{t} := \int_{0}^{t} (t-u)^{\Hm}\D Z_u,
\end{equation}
is a centred Gaussian process with covariance structure
$$
\EE(\Vv_t \Vv_s) = s^{2H}\int_{0}^{1} \left(\frac{t}{s}-u\right)^{\Hm}(1-u)^{\Hm}\D u,
\qquad\text{for any }s,t \in [0,1].
$$
We shall also introduce, for any $0 \leq T\leq t$ the notations
\begin{equation}\label{eq:VtT}
\Vv_{t, T} := \int_{T}^{t} (t-u)^{\Hm}\D Z_u
\qquad\text{and}\qquad
\Vv_t^T:=\int_{0}^{T}(t-u)^{\Hm}\D Z_u.
\end{equation}
Note in particular that $\Vv^T_T = \Vv_T$.
The two standard Brownian motions~$W$ and~$Z$ are correlated with correlation parameter $\rho \in (-1,1)$.
Here, $\xi_T(t)$ denotes the forward variance observed at time~$T$ for a maturity equal to~$t$.
More precisely, if $\sigma^2_T(t)$ denotes the fair strike of a variance swap observed at time~$T$ 
and maturing at~$t$, then 
$$
\sigma^2_T(t)=\frac{1}{t-T}\int_T^t\xi_T(u)\D u,
$$
or equivalently
$\xi_T(t)=\frac{\D}{\D t}\left((t-T)\sigma^2_T(t)\right)$.
For any fixed $t>0$, the process $(\xi_s(t))_{s\leq t}$, is a martingale, i.e.
$\EE[\xi_s(t)|\Ff_u]=\xi_u(t)$,
for all $u\leq s\leq t$.
Furthermore,
$\Vv^T_t$ is a centred Gaussian process with variance
\begin{equation}\label{eq:VtTVariance}
\VV(\Vv^T_t) = \frac{t^{2H}-(t-T)^{2H}}{2H}, 
\qquad\text{for }t\geq T,
\end{equation}
and covariance structure
\begin{equation}\label{eq:CovStructure}
\EE\left(\Vv^T_t \Vv^T_s\right)
 = \int_0^T\left[(t-u)(s-u)\right]^{\Hm}\D u
 = \frac{(s - t)^{\Hm}}{\Hp}\left\{
 t^{\Hp}\F\left(\frac{-t}{s - t}\right) - (t - T)^{\Hp}\F\left(\frac{T-t}{s - t}\right) \right\},
\end{equation}
for any $t<s$, where we introduce the function $\F:\RR_-\to\RR$ as
\begin{equation}\label{eq:F}
\F(u) := \text{}_2F_1\left(-\Hm, \Hp , 1+\Hp , u\right),
\end{equation}
and $\text{}_2F_1$ is the hypergeometric function~\cite[Chapter 15]{Abramowitz-Stegun}. 
Finally, the quadratic variation of $\Vv$ is given by
\begin{equation}\label{eq: quadratic_variation}
[\Vv]_t= \frac{t^{2H}}{2H}, 
\qquad\text{for }t\geq 0.
\end{equation}

\subsection{Hybrid simulation scheme}\label{Hybrid simulation scheme}
Bayer, Friz and Gatheral~\cite{BFG15} present a Cholesky method to simulate the rough Bergomi model. 
Although exact, this method is very slow and other approaches need to be considered for calibration purposes.
Recently, Bennedsen, Lunde and Pakkanen~\cite{BLP15} presented a new simulation scheme for Brownian semistationary ($\BSS$) processes. 
This method, as opposed to Cholesky, is an approximate method. 
However, in~\cite{BLP15} the authors show that the method yields remarkable results 
in the case of the rough Bergomi model. 
In addition, their approach leads to a natural simulation of both the Volterra process~$\Vv$ 
and the stock price~$S$ and yields a computational complexity of order~$\Oo(n\log n)$.

\begin{definition} \label{def:semistationary}
Let $W$ be a standard Brownian motion on a given filtered probability space $(\Omega, \Ff, (\Ff_t)_{t\geq 0},\mathbb{P})$. A truncated Brownian semistationary ($\BSS$) process is defined as
$\Bb(t) = \int_{0}^{t} g(t-s)\sigma(s)\D W_s$, for $t\geq 0$,
where~$\sigma$ is $(\Ff_t)_{t\geq 0}$-predictable with locally bounded trajectories and finite second moments, and $g:(0,\infty)\to[0,\infty)$ is Borel measurable and square integrable.
We shall call it a $\BSS(\alpha, W)$ process if furthermore
\begin{enumerate}[(i)]
\item\label{A1} there exists $\alpha \in\left(-\frac{1}{2},\frac{1}{2}\right)\setminus\{0\}$ such that
$g(x)=x^\alpha L_g(x)$ for all $x\in(0,1]$,
where $L_g\in \Cc^1((0,1]\to [0,\infty))$, is slowly varying\footnote{A measurable function $L:(0,1]\to[0,\infty)$ is slowly varying~\cite{Bingham} at $0$ if for any $t>0$,
$\lim\limits_{x\downarrow 0}L(tx) / L(x)=1.$}
 at the origin and bounded away from zero.
Moreover, there exists a constant $C>0$ such that 
$|L'_g(x)|\leq C(1+x^{-1})$ for all $x\in(0,1]$;
\item \label{A2} the function~$g$ is differentiable on $(0,\infty)$.
\end{enumerate}
\end{definition}
Under this assumption, the hybrid scheme, proposed in~\cite{BLP15} 
and recalled in Appendix~\ref{sec:HybridScheme},
provides an efficient way to simulate $\BSS$ processes.
It applies in particular to the rough Bergomi model:
\begin{proposition}
The Volterra process~$\Vv$ in~\eqref{eq:Volterra} is a truncated $\BSS(\Hm, Z)$ process.
\end{proposition}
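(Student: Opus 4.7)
The plan is to verify, one by one, the defining conditions of Definition~\ref{def:semistationary} for the specific representation
$$
\Vv_t \;=\; \int_0^t g(t-s)\,\sigma(s)\,\D Z_s,
\qquad\text{with } g(x):=x^{\Hm} \text{ and } \sigma\equiv 1.
$$
The identification of the kernel is forced by~\eqref{eq:Volterra}, and the choice $\sigma\equiv 1$ makes the integrand $(\Ff_t)$-predictable, (locally) bounded and trivially with finite second moments, and~$g$ is Borel-measurable as a power function.

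I would first deal with the baseline requirements (square-integrability of~$g$ on $(0,t]$ and the fact that $g:(0,\infty)\to[0,\infty)$) by writing $\int_0^t g(x)^2 \D x = \int_0^t x^{2\Hm}\D x$, which is finite iff $2\Hm>-1$, i.e.\ $H>0$: this is part of the running hypothesis that $H\in(0,1)$.

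Next, to check condition~(\ref{A1}), I would take $\alpha:=\Hm$ and $L_g\equiv 1$. Membership $\alpha\in(-\tfrac12,\tfrac12)\setminus\{0\}$ is immediate because $H\in(0,1)$ and $H\neq\tfrac12$ (the non-rough case being excluded). The slowly-varying function~$L_g$ is then the constant~$1$, hence lies in $\Cc^1((0,1]\to[0,\infty))$, is slowly varying at~$0$ (the ratio $L_g(tx)/L_g(x)=1$ for every $t,x$), and is bounded away from zero. The derivative condition $|L_g'(x)|\leq C(1+x^{-1})$ holds with any $C>0$ since $L_g'\equiv 0$. Finally, condition~(\ref{A2}) is satisfied because $g(x)=x^{\Hm}$ is smooth on $(0,\infty)$ with $g'(x)=\Hm\, x^{H-3/2}$.

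There is no real obstacle: the whole argument is a direct pattern-matching of $g(x)=x^{\Hm}$ against the assumptions, and the only point worth stressing is the trivial verification for the constant slowly-varying factor~$L_g\equiv 1$, which makes each of the regularity clauses in~(\ref{A1}) automatic.
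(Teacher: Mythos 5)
Your proposal is correct and follows essentially the same route as the paper's own (much terser) proof: identify $g(x)=x^{\Hm}$ and $\sigma\equiv 1$, take $L_g\equiv 1$, and observe that every clause of Definition~\ref{def:semistationary} is then automatic. Your version is simply more explicit — in particular you spell out the square-integrability check $\int_0^t x^{2\Hm}\D x<\infty$ for $H>0$, which the paper leaves implicit.
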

\begin{proof}
From~\eqref{eq:Volterra}, $g(x) \equiv x^{\Hm}$ and $\sigma(\cdot)\equiv 1$ as in Definition~\ref{def:semistationary}, 
so that~$V$ is a $\BSS$ process.
Since $\Hm\in(-\frac{1}{2},0)$, then $L_g \equiv 1$, 
and~$\Vv$ satisfies Definition~\ref{def:semistationary}(i);
Definition~\ref{def:semistationary}(i) trivially holds, and so does the corollary.
\end{proof}

The corollary implies that we can apply the hybrid scheme to~$\Vv$.
In particular,  for $\kappa=1$ the matrix form representation of the scheme reads
(recall that $n_T:=\lfloor nT\rfloor$)
$$
\begin{pmatrix} 
\Vv\left(\frac{1}{n}\right)\\
\Vv\left(\frac{2}{n}\right)\\
\vdots\\
\Vv\left(\frac{n_T}{n}\right)
\end{pmatrix}=
\begin{pmatrix} 
\overline{Z}_{0,1} & 0 & \cdots &  \cdots & 0\\
\overline{Z}_{1,1} & \overline{Z}_0  & \ddots & \ddots & 0\\
\overline{Z}_{2,1}& \overline{Z}_1  & \ddots & \ddots & \vdots\\
\vdots & \ddots & \ddots & \ddots & \vdots\\
\overline{Z}_{n_T-1,1}& \overline{Z}_{n_T-2}  & \cdots & \overline{Z}_1 & \overline{Z}_0
\end{pmatrix} 
\begin{pmatrix}
1\\
\left(\frac{1}{n}b_1^*\right)^{\Hm}\\
\left(\frac{1}{n}b_2^*\right)^{\Hm}\\
\vdots\\
\left(\frac{1}{n}b_{n_T-1}^*\right)^{\Hm}
\end{pmatrix},
$$
where the coefficients $\{b^*_i\}$ are defined in~\eqref{eq:bStar}.
This matrix multiplication is, by brute force, of order $\Oo(n^2)$, 
however using discrete convolution we may use FFT to reduce it to $\Oo(n\log n)$ as suggested in~\cite{BLP15} .

\section{Rough Bergomi and VIX}\label{sec:VIXFutures}
Bayer, Friz and Gatheral~\cite{BFG15} briefly discuss the lack of consistency of the rough Bergomi model 
with observed VIX options data, leading to an incorrect term structure of the VIX. 
In this section, we investigate in detail the dynamics of the VIX,
and propose a log-normal approximation. 
Additionally, we investigate the viability of the model in terms of VIX Futures and options, 
and compare it to the approximation in~\cite{BFG15}.

\subsection{VIX Futures in the rough Bergomi model.}
From now on, we fix a given maturity $T\geq 0$, 
and define the VIX at time~$T$ via the continuous-time monitoring formula
$$
\VIX^2_T := \EE\left(\left.\frac{1}{\Delta}\int_{T}^{T+\Delta}\D\langle X_s, X_s\rangle\D s \right|\Ff_T\right),
$$
where $\Delta$ is equal to $30$ days.
The risk-neutral formula for the VIX future~$\Vk_T$ with maturity~$T$
is then given by
\begin{equation}\label{eq:VIXPrice}
\mathfrak{V}_T:=\EE\left(\VIX_T|\Ff_0\right)
=\EE\left(\left.\sqrt{\frac{1}{\Delta}\int_T^{T+\Delta}\EE\left(\D\langle X_s, X_s\rangle|\Ff_T\right)\D s}\right|\Ff_0\right)
 = \EE\left(\left.\sqrt{\frac{1}{\Delta}\int_T^{T+\Delta}\xi_T(s) \D s}\right|\Ff_0\right).
\end{equation}
Note that, when $T>0$, $\xi_T(s)$ is a market input which is not~$\Ff_0$-measurable, 
and is hence difficult to interpret only knowing~$\Ff_0$.
We shall make repeated use of the following random variable defined for any $t\geq T$, by
\begin{equation}\label{eq:eta}
\eta_T(t) := \exp\left(2\nu C_{H}\Vv_t^T\right).
\end{equation}
\begin{proposition}\label{prop:rBergomiVIX}
The VIX  dynamics are given by
$$
\VIX_T
  = \left\{\frac{1}{\Delta}\int_T^{T+\Delta}\xi_0(t)\eta_T(t)
 \exp\left(\frac{\nu^2C_H^2}{H}\left[(t-T)^{2H}-t^{2H}\right] \right)\D t\right\}^{1/2}.
$$
\end{proposition}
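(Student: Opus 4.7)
The plan is to reduce the stated formula to a single computation of the forward variance $\xi_T(t)$ for $t\geq T$, via a clean decomposition of the Volterra kernel integral into an $\Ff_T$-measurable piece and an independent Gaussian piece.

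First, I would use the middle identity in~\eqref{eq:VIXPrice} to strip off the outer $\EE(\cdot|\Ff_0)$ and the square root, so the whole claim reduces to proving the pathwise identity
\[
\VIX_T^2 \;=\; \frac{1}{\Delta}\int_T^{T+\Delta}\xi_T(t)\,\D t
\;=\; \frac{1}{\Delta}\int_T^{T+\Delta}\xi_0(t)\,\eta_T(t)\,\exp\!\left(\frac{\nu^2C_H^2}{H}\bigl[(t-T)^{2H}-t^{2H}\bigr]\right)\D t,
\]
so the core task is to show, for every $t\in[T,T+\Delta]$, that
\[
\xi_T(t)=\xi_0(t)\,\eta_T(t)\,\exp\!\left(\frac{\nu^2C_H^2}{H}\bigl[(t-T)^{2H}-t^{2H}\bigr]\right).
\]

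Second, I would write the Doléans--Dade exponential in the definition~\eqref{eq:rBergomi} in closed form using the quadratic variation~\eqref{eq: quadratic_variation}, obtaining
\[
V_t=\xi_0(t)\exp\!\left(2\nu C_H\Vv_t-\frac{\nu^2C_H^2\,t^{2H}}{H}\right),
\]
and then use the splitting $\Vv_t=\Vv_t^T+\Vv_{t,T}$ from~\eqref{eq:VtT}. The first summand $\Vv_t^T=\int_0^T(t-u)^{\Hm}\D Z_u$ is $\Ff_T$-measurable, while $\Vv_{t,T}=\int_T^t(t-u)^{\Hm}\D Z_u$ is independent of $\Ff_T$ by the independence of Brownian increments, and is a centred Gaussian with variance $\int_T^t(t-u)^{2\Hm}\D u=\frac{(t-T)^{2H}}{2H}$ (this is the same computation that yields~\eqref{eq:VtTVariance}, with $T$ and $t$ playing symmetric roles on the complementary interval).

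Third, the martingale property $\xi_T(t)=\EE(V_t\,|\,\Ff_T)$ (which holds because $V_t$ is the instantaneous variance and $\xi_s(t)$ is its conditional expectation) combined with the above splitting gives
\[
\xi_T(t)=\xi_0(t)\,\eta_T(t)\,\exp\!\left(-\frac{\nu^2C_H^2\,t^{2H}}{H}\right)\EE\!\left[\exp(2\nu C_H\Vv_{t,T})\right],
\]
and the Gaussian moment generating function evaluates the remaining expectation as $\exp\!\left(\frac{\nu^2C_H^2(t-T)^{2H}}{H}\right)$. Collecting the two exponentials produces exactly the factor $\exp\!\left(\frac{\nu^2C_H^2}{H}\bigl[(t-T)^{2H}-t^{2H}\bigr]\right)$ appearing in the claim, and substitution into the integral concludes the proof.

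I do not expect a genuine obstacle: all steps are routine once the Volterra integral is split at $T$. The only point that requires a mild justification is the martingale identity $\EE(V_t|\Ff_T)=\xi_T(t)$, which follows from the forward-variance martingale property stated immediately after~\eqref{eq:VtT} together with the explicit form of~$V_t$; this is what makes the $\Ff_0$-measurable input $\xi_0(\cdot)$ survive the conditioning and the $\Ff_T$-measurable factor $\eta_T(t)$ emerge cleanly.
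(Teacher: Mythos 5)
Your proposal is correct and follows essentially the same route as the paper: split $\Vv_t=\Vv_t^T+\Vv_{t,T}$ at time $T$, pull out the $\Ff_T$-measurable factor $\eta_T(t)$ and the $\Ff_0$-measurable $\xi_0(t)$, and evaluate $\EE[\exp(2\nu C_H\Vv_{t,T})]$ via the Gaussian moment generating function with variance $(t-T)^{2H}/(2H)$. The only cosmetic difference is that you establish the forward-variance representation $\xi_T(t)=\EE(V_t|\Ff_T)$ first and integrate it (which is the content of the paper's subsequent Proposition~\ref{prop:xiMartingale}), whereas the paper conditions $V_t$ directly inside the integral and reads off that representation afterwards.
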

\begin{proof}
Using Fubini's theorem and the instantaneous variance representation in~\eqref{eq:rBergomi}, we can write
\begin{align*}
\VIX^2_T
 &  = \frac{1}{\Delta}\int_T^{T+\Delta}\EE\left(V_s|\Ff_T\right)\D s
 = \frac{1}{\Delta}\int_T^{T+\Delta}\EE\Big[\xi_0(t)\Ee\left(2\nu C_{H}\Vv_t\right)|\Ff_T\Big]\D t\\
 & = \frac{1}{\Delta}\int_T^{T+\Delta}\EE\left[\xi_0(t)\eta_T(t)
\left.\exp\left(2\nu C_{H}\Vv_{t,T}-\frac{\nu^2C_H^2t^{2H}}{H}\right)\right|\Ff_T\right]\D t,
\end{align*}
with $\Vv_{t, T}$ defined in~\eqref{eq:VtT}.
Since $\eta_T(t)\in\Ff_T$ and $\xi_0(t)\in\Ff_0$, this expression simplifies to
$$
\VIX^2_T=\frac{1}{\Delta}\int_T^{T+\Delta}\xi_0(t)\eta_T(t)
\EE\left[\left.\exp\left(2\nu C_{H}\Vv_{t, T}-\frac{\nu^2C_H^2t^{2H}}{H}\right)\right|\Ff_T\right]\D t.
$$
The proposition follows since $\Vv_{t,T}$ is centred Gaussian, independent of~$\Ff_T$,
with variance given in~\eqref{eq:VtTVariance}, and
$
\EE\left(\left.\E^{2\nu C_{H}\Vv_{t,T}}\right|\Ff_T\right) = 
\EE\left(\E^{2\nu C_{H}\Vv_{t,T}}\right) = 
\exp\left(\frac{\nu^2 C_{H}^2}{H}(t-T)^{2H}\right).
$
\end{proof}
The main challenge for simulation is~$\eta_T(t)$. 
However, since the latter is independent of~$\xi_0(\cdot)$, 
robustness of simulation schemes for the VIX will not be affected 
by the qualitative properties of the initial variance curve~$\xi_0$.

\begin{proposition}\label{prop:xiMartingale}
The forward variance curve~$\xi_T$ in the rough Bergomi model admits the representation
$$
\xi_T(t)=\xi_0(t)\eta_T(t)\exp\left(\frac{\nu^2C_H^2}{H}\left[(t-T)^{2H}-t^{2H}\right]\right),
\qquad\text{for any }t\geq T.
$$
\end{proposition}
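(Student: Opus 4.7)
The plan is to recognise that $\xi_T(t)$ coincides with $\EE[V_t \mid \Ff_T]$, which reduces the statement to a Gaussian conditional expectation calculation essentially identical to the one already performed in the proof of Proposition~\ref{prop:rBergomiVIX}. To justify this identification I would start from the variance-swap definition $\sigma^2_T(t) = \frac{1}{t-T}\int_T^t \xi_T(u)\D u$, combined with the no-arbitrage representation $\sigma^2_T(t) = \EE\left[\left.\frac{1}{t-T}\int_T^t V_s\,\D s \right| \Ff_T\right]$; a Fubini argument on the right-hand side together with differentiation of $(t-T)\sigma^2_T(t) = \int_T^t \xi_T(u)\D u$ in $t$ then yields $\xi_T(t) = \EE[V_t \mid \Ff_T]$ for $t \geq T$.

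Given this, the rest is essentially algebra. I would unfold the Dol\'eans--Dade exponential using the quadratic variation~\eqref{eq: quadratic_variation}, writing
$$
V_t = \xi_0(t)\exp\left(2\nu C_H \Vv_t - \frac{\nu^2 C_H^2}{H}t^{2H}\right),
$$
and then split the Volterra integral via $\Vv_t = \Vv_t^T + \Vv_{t,T}$ as in~\eqref{eq:VtT}. Since $\xi_0(t)\in\Ff_0$ and $\Vv_t^T$ is $\Ff_T$-measurable, the factors $\xi_0(t)\eta_T(t)\exp(-\nu^2 C_H^2 t^{2H}/H)$ pull out of the conditional expectation, leaving only $\EE[\exp(2\nu C_H \Vv_{t,T}) \mid \Ff_T]$.

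Independence of $\Vv_{t,T}$ from $\Ff_T$ (it is a stochastic integral over $[T,t]$ of a deterministic kernel against $Z$) together with its centred Gaussian law of variance $(t-T)^{2H}/(2H)$, obtained directly from It\^o isometry, gives
$$
\EE\left[\left.\exp\left(2\nu C_H \Vv_{t,T}\right)\right|\Ff_T\right] = \exp\left(\frac{\nu^2 C_H^2 (t-T)^{2H}}{H}\right),
$$
and multiplying the three exponential factors produces exactly the stated formula. The only conceptual step is the first paragraph's identification of $\xi_T(t)$ with the conditional expectation of $V_t$; once that is in place, every remaining manipulation is a direct re-run of the computation in Proposition~\ref{prop:rBergomiVIX}, so I do not foresee any serious obstacle.
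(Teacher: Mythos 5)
Your proposal is correct and follows essentially the same route as the paper: the paper's proof likewise rests on the identification $\xi_T(t)=\EE[V_t\,|\,\Ff_T]$ (which it attributes to the variance-swap relation behind~\eqref{eq:VIXPrice}) and then reuses the conditional-expectation computation from Proposition~\ref{prop:rBergomiVIX}. Your version simply spells out more explicitly the Fubini/differentiation argument for that identification and the Gaussian moment-generating-function step, both of which are accurate.
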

\begin{proof}
Since $\EE\left(V_t|\Ff_T\right) = \xi_T(t)$ by~\eqref{eq:VIXPrice}, 
the proposition follows from Proposition~\ref{prop:rBergomiVIX} and the equality
$$
\EE\left(V_t|\Ff_T\right)
 = \xi_0(t)\eta_T(t)\exp\left(\frac{\nu^2C_H^2}{H}\left[(t-T)^{2H}-t^{2H}\right]\right).
$$
\end{proof}
Bayer, Friz and Gatheral~\cite{BFG15} did not derive such a representation for~$\xi_T$, 
and their approach for pricing VIX derivatives relies on an approximation 
which avoids the computations developed in this section.
Proposition~\ref{prop:xiMartingale} allows for a better understanding of the process~$\xi_T$, 
and for an innovative approach to price VIX derivatives.

\subsection{Upper and lower bounds for VIX Futures}
\begin{theorem}\label{thm:Bounds}
The following bounds hold for VIX Futures:
\begin{equation}
\frac{1}{\Delta}\int_T^{T+\Delta}\sqrt{\xi_0(t)}
\exp\left(\frac{\nu^2C_H^2}{4H}\left[(t-T)^{2H}-t^{2H}\right]\right)\D t
\leq \Vk_T
\leq \left\{\frac{1}{\Delta}\int_T^{T+\Delta}\xi_0(s)\D s\right\}^{1/2}.
\end{equation}
\end{theorem}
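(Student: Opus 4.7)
The plan is to derive the two bounds separately, each as a one-step Jensen inequality applied to a different variable, followed by explicit Gaussian computations using Proposition~\ref{prop:xiMartingale}.

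For the upper bound, I would exploit the concavity of $\sqrt{\cdot}$ at the level of the outer conditional expectation in~\eqref{eq:VIXPrice}. Jensen's inequality gives
$$
\Vk_T=\EE\left[\left.\sqrt{\frac{1}{\Delta}\int_T^{T+\Delta}\xi_T(s)\D s}\right|\Ff_0\right]
\leq \sqrt{\EE\left[\left.\frac{1}{\Delta}\int_T^{T+\Delta}\xi_T(s)\D s\right|\Ff_0\right]}.
$$
By Fubini and the martingality of $s\mapsto \xi_s(t)$ recalled right after~\eqref{eq:VIXPrice}, we have $\EE[\xi_T(s)\mid\Ff_0]=\xi_0(s)$, which yields the claimed upper bound.

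For the lower bound, I would first use Jensen the other way around, now in the $s$-variable against the probability measure $\D s/\Delta$ on $[T,T+\Delta]$: since $\sqrt{\cdot}$ is concave,
$$
\sqrt{\frac{1}{\Delta}\int_T^{T+\Delta}\xi_T(s)\D s}
\geq \frac{1}{\Delta}\int_T^{T+\Delta}\sqrt{\xi_T(s)}\,\D s.
$$
Taking conditional expectation given~$\Ff_0$ and applying Fubini, it remains to compute $\EE[\sqrt{\xi_T(s)}\mid\Ff_0]$ explicitly. Using Proposition~\ref{prop:xiMartingale},
$$
\sqrt{\xi_T(s)}=\sqrt{\xi_0(s)}\,\exp\!\left(\nu C_H \Vv_s^T\right)\exp\!\left(\frac{\nu^2 C_H^2}{2H}\left[(s-T)^{2H}-s^{2H}\right]\right),
$$
and since $\Vv_s^T$ is centred Gaussian with variance $\frac{s^{2H}-(s-T)^{2H}}{2H}$ by~\eqref{eq:VtTVariance}, the moment generating function of the Gaussian yields
$$
\EE\!\left[\exp(\nu C_H \Vv_s^T)\right]=\exp\!\left(\frac{\nu^2 C_H^2}{4H}\left[s^{2H}-(s-T)^{2H}\right]\right).
$$
Multiplying the two exponentials, half of the second-order Gaussian correction cancels and one is left with $\sqrt{\xi_0(s)}\exp\bigl(\frac{\nu^2 C_H^2}{4H}[(s-T)^{2H}-s^{2H}]\bigr)$, which is exactly the lower bound integrand.

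The main obstacle is essentially bookkeeping: ensuring that the exponent left over after cancelling the Gaussian moment-generating-function contribution against the deterministic compensator from Proposition~\ref{prop:xiMartingale} is precisely $\frac{\nu^2 C_H^2}{4H}[(s-T)^{2H}-s^{2H}]$ (half of the compensator, with the same sign); no analytic difficulty arises beyond this algebraic check. Both bounds are saturated in the degenerate limit $\nu\to 0$, which is a useful sanity check.
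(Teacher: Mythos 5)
Your proof is correct and follows essentially the same route as the paper's: conditional Jensen on the outer expectation together with the martingale property of the forward variance for the upper bound, and Jensen in the time variable against $\D s/\Delta$ (the paper phrases this step as Cauchy--Schwarz) followed by the representation of Proposition~\ref{prop:xiMartingale}, the variance~\eqref{eq:VtTVariance} and the Gaussian moment generating function for the lower bound, with the same cancellation of exponents. The only quibble is your closing sanity check: as $\nu\to 0$ the upper bound is attained, but the lower bound reduces to $\frac{1}{\Delta}\int_T^{T+\Delta}\sqrt{\xi_0(t)}\,\D t$, which equals $\{\frac{1}{\Delta}\int_T^{T+\Delta}\xi_0(t)\,\D t\}^{1/2}$ only when $\xi_0$ is constant on $[T,T+\Delta]$, so the lower bound is not saturated in that limit for a general curve.
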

\begin{proof}
The conditional Jensen's inequality gives
$$
\Vk_T
 = \EE\left(\VIX_T|\Ff_0\right)
 = \EE\left(\left.\sqrt{\frac{1}{\Delta}\int_T^{T+\Delta}\xi_T(s)\D s}\right|\Ff_0\right)
 \leq \sqrt{\EE\left(\left.\frac{1}{\Delta}\int_T^{T+\Delta}\xi_T(s)\D s\right|\Ff_0\right)}.
$$
Furthermore, since~$\xi_0$ is $\Ff_0$-adapted, Fubini's theorem along with the martingale property 
of~$\xi_T$ yield the upper bound
$\Vk_T = \EE\left(\VIX_T|\Ff_0\right) \leq \sqrt{\Delta^{-1}\int_T^{T+\Delta}\xi_0(s)\D s}$.
To obtain a lower bound we use the  representation in Proposition~\ref{prop:xiMartingale}, 
and Cauchy-Schwarz's inequality, and Fubini's theorem, so that
\begin{align*}
\Vk_T & = \EE\left(\VIX_T|\Ff_0\right)
 =  \EE\left[\left.\sqrt{\frac{1}{\Delta}\int_T^{T+\Delta}\xi_0(t)\eta_T(t)
 \exp\left(\frac{\nu^2C_H^2}{H}\left[(t-T)^{2H}-t^{2H}\right]\right)\D t}\right|\Ff_0\right]\\
 & \geq \EE\left[\left.\frac{1}{\Delta}\int_T^{T+\Delta}\sqrt{\xi_0(t)\eta_T(t)}
 \exp\left(\frac{\nu^2C_H^2}{2H}\left[(t-T)^{2H}-t^{2H}\right]\right)\D t\right|\Ff_0\right]\\
 & = \frac{1}{\Delta}\int_T^{T+\Delta}\sqrt{\xi_0(t)}\EE\left(\sqrt{\eta_T(t)}\right)
 \exp\left(\frac{\nu^2C_H^2}{2H}\left[(t-T)^{2H}-t^{2H}\right]\right)\D t\\
 & = \frac{1}{\Delta}\int_T^{T+\Delta}\sqrt{\xi_0(t)}
 \exp\left(\frac{\nu^2 C_{H}^2}{4H}\left[t^{2H}-(t-T)^{2H}\right]\right)
 \exp\left(\frac{\nu^2C_H^2}{2H}\left[(t-T)^{2H}-t^{2H}\right]\right)\D t\\
 & = \frac{1}{\Delta}\int_T^{T+\Delta}\sqrt{\xi_0(t)}
 \exp\left(\frac{\nu^2C_H^2}{4H}\left[(t-T)^{2H}-t^{2H}\right]\right)\D t,
\end{align*}
since $\eta_T(t)^{1/2}$ is log-normal (Proposition~\ref{prop:rBergomiVIX}), so that
$\EE(\sqrt{\eta_T(t)}) = \exp\left(\frac{\nu^2 C_{H}^2}{4H}\left[t^{2H}-(t-T)^{2H}\right]\right)$.
\end{proof}

We perform a numerical experiment to check the tightness of the bounds obtained in Proposition~\ref{thm:Bounds}). 
For this analysis we consider three qualitative scenarios for the initial forward variance curve:
\begin{equation}
\text{Scenario }1:\;\xi_0(t)=0.234^2;\quad
\text{Scenario }2:\;\xi_0(t)=0.234^2(1+t)^2; \quad
\text{Scenario }3:\;\xi_0(t)=0.234^2\sqrt{1+t}.\label{eq:scenarios}
\end{equation}
\begin{figure}[h]
\centering
\includegraphics[scale=0.2]{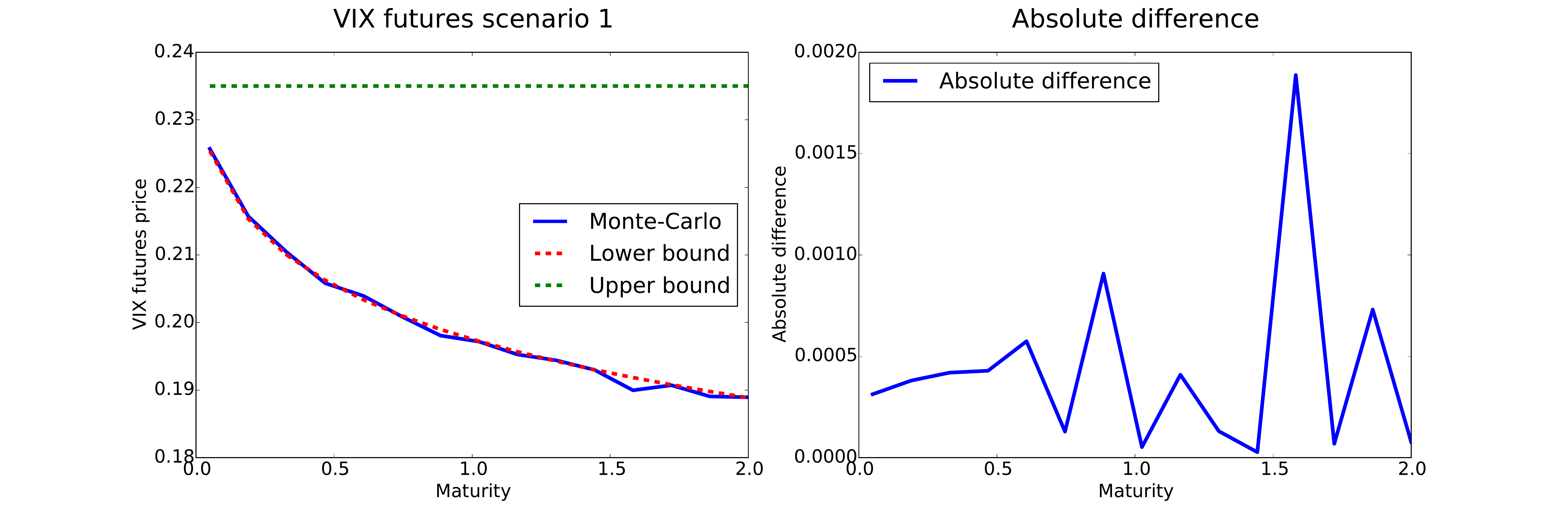}
\includegraphics[scale=0.2]{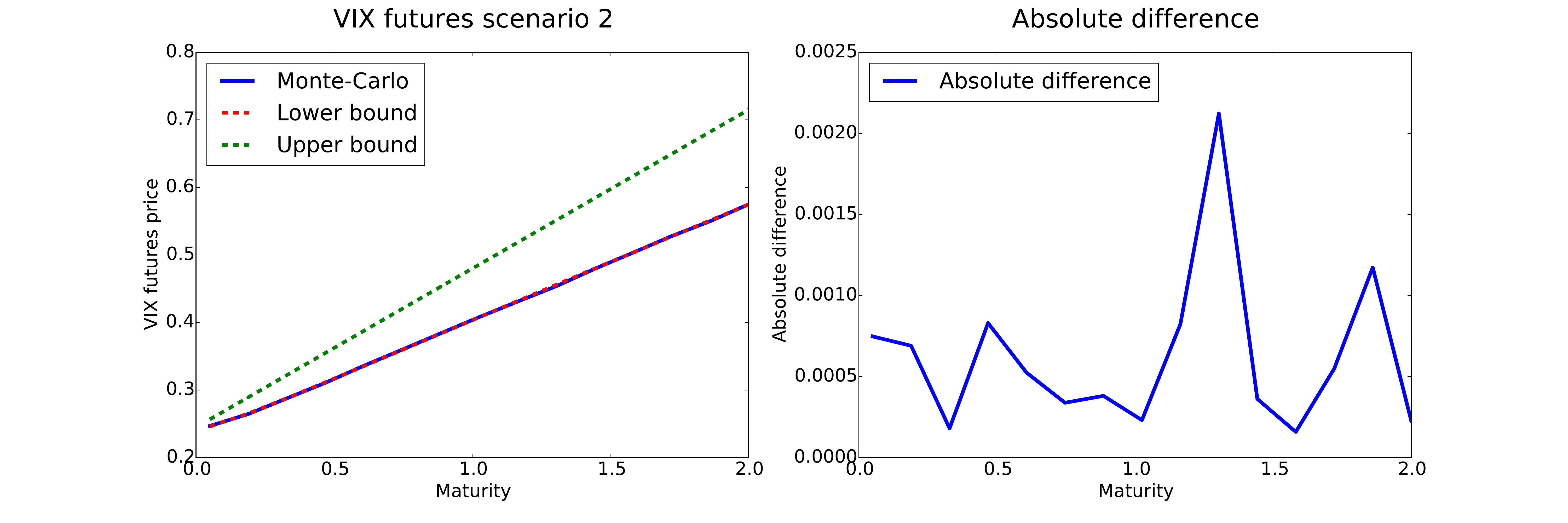}
\includegraphics[scale=0.2]{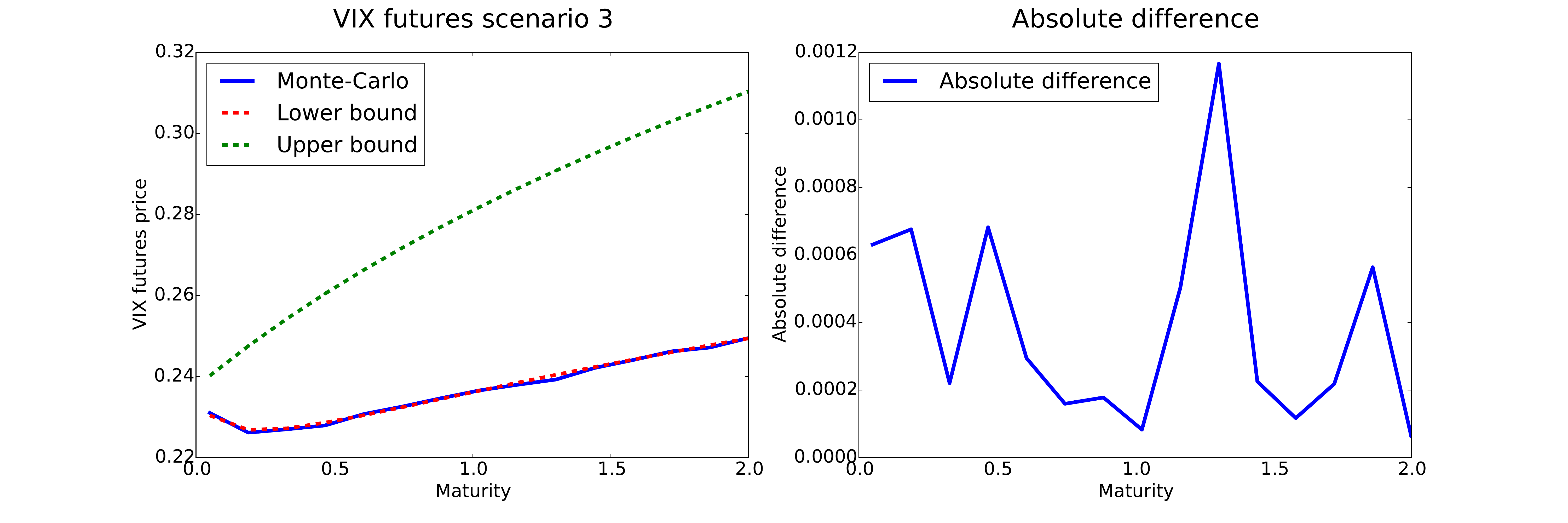}
\caption{Bounds vs. Monte Carlo (Truncated Cholesky) in all three scenarios.}
\label{fig:bounds}
\end{figure}
Figures~\ref{fig:bounds} suggest that the lower bound given in Proposition~\ref{thm:Bounds}) is surprisingly tight for very different shapes of~$\xi_0$. 
This can be explained with the following argument:
consider a simplified and deterministic version of the VIX futures price 
in Proposition~\ref{prop:rBergomiVIX}), denoted by
$$
\phi(T)
 := \sqrt{\frac{1}{\Delta}\int_T^{T+\Delta}f(t)\D t}
  = \sqrt{f(T)+\frac{\Delta}{2}f'(T)+\frac{\Delta^2}{6}f''(T)+\Oo(\Delta^3)},
$$
for some strictly positive (deterministic) function $f\in \Cc^2(\RR)$. 
We further introduce
$$
\psi(T)
 := \frac{1}{\Delta}\int_T^{T+\Delta}\sqrt{f(t)}\D t
  = \sqrt{f(T)}+\frac{\Delta}{4}\frac{f'(T)}{\sqrt{f(T)}}+\Oo(\Delta^2),
  $$
which is the corresponding lower bound by Cauchy-Schwarz's (or Jensen's) inequality, so that
$$
\phi(T)^2-\psi(T)^2=\Delta^2\left(\frac{f''(T)}{6}-\frac{f'(T)^2}{16f(T)}\right)+\Oo(\Delta^3).
$$
Hence, we observe that for small $\Delta$, as it is the case in VIX futures, the lower bound is very close from the original value which explains (at least for the deterministic case) the behaviour observed in Figure \ref{fig:bounds}.

\subsection{Numerical implementation of VIX process}
In this section, we investigate different simulation schemes for the VIX in the rough Bergomi model.

\subsubsection{Hybrid scheme and forward Euler approach}
In order to simulate the process $(\Vv_t^T)_{t\in[T,T+\Delta]}$ it is important to notice from~\eqref{eq:VtT} that the kernel has a singularity only for $\Vv^T_T$, hence we may overcome this by simulating the process using the hybrid scheme from Section~\ref{Hybrid simulation scheme}. Then, we may easily
extract~$Z$ and simulate~$(\Vv^T_t)_{t\in(T,T+\Delta]}$ using the forward Euler scheme with complexity $\Oo(n)$. A forward Euler scheme is chosen in this case due to the fact that the kernel in~\eqref{eq:VtT} no longer has a singularity for $t\in(T,T+\Delta]$ and this method is faster than the hybrid scheme.
Once~$(\Vv_t^T)_{t\in[T,T+\Delta]}$ is simulated, numerical integration routines may be used to simulate the VIX process 
using the expression in Proposition~\ref{prop:rBergomiVIX}.
It must be pointed out that this approach is computationally expensive and memory consuming
since it involves to simulate the Volterra process using the hybrid scheme with complexity 
of $\Oo(n\log n)$ and additionally, each~$\Vv^T_t$ by forward Euler.
The simulation algorithm can be summarised as follows:
\begin{algorithm}[VIX simulation in the rough Bergomi model]\label{algo:VIXrB}
Fix a grid $\Tt = \{t_i\}_{i=0,\ldots,n_T}$ and $\kappa\geq 1$.
\begin{enumerate}
\item Simulate the Volterra process~$(\Vv_t)_{t \in [0,T]}$ using the hybrid scheme 
in Appendix~\ref{sec:HybridScheme}, yielding a sample of the random variable $\Vv_T^T=\Vv_T$;
\item extract the path of the Brownian motion~$Z$ driving the Volterra process.
\begin{equation*}
\begin{array}{rll}
Z_{t_{i}}& =Z_{t_{i-1}}+n^{\Hm}\left(\Vv(t_{i}) - \Vv(t_{i-1})\right),& \text{for }i=1,\ldots,\kappa,\\
Z_{t_{i}}&=Z_{t_{i-1}}+\overline{Z}_{i-1} ,   & \text{for }i>\kappa;
  \end{array}
\end{equation*}
\item fix a grid $\mathfrak{T}=\{\tau_j\}_{j=0,...,N}$ on $[T,T+\Delta]$ and
approximate the continuous-time process~$\Vv^T$ by the discrete-time version~$\widetilde{\Vv}^T$ 
defined via the following forward Euler scheme:
$$
\widetilde{\Vv}^T_{\tau_0}  := \Vv^T_T
\qquad\text{and}\qquad
\widetilde{\Vv}^T_{\tau_j} :=  \sum_{i=1}^{n_T} \frac{Z_{t_i}-Z_{t_{i-1}}}{(\tau_j-t_{i-1})^{-\Hm}}, 
\qquad \text{for }j=1,\ldots,N;
$$
\item compute the VIX process via numerical integration, for example using a composite trapezoidal rule:
$$
\VIX_T \approx 
\left\{\frac{1}{\Delta}\sum_{j=0}^{N-1}\frac{Q^2_{T,\tau_{j}}+Q^2_{T,\tau_{j+1}}}{2}(\tau_j-\tau_{j-1})\right\}^{1/2},
$$
where 
$
Q^2_{T,\tau_j}
 := \xi_0(\tau_j)\exp\left(2v C_H \widetilde{\Vv}^T_{\tau_j}\right)\exp\left(\frac{\nu^2C_H^2}{H}\left((\tau_j-T)^{2H}-\tau_j^{2H}\right) \right)
$.
\end{enumerate}
\end{algorithm}
\begin{remark}
Step 4 may obviously be replaced by any available numerical integration routine,
but one must then carefully choose the partition in Step 3. 
\end{remark}

\subsubsection{Truncated Cholesky approach}
Alternatively, one could use the more expensive, yet exact, Cholesky decomposition to simulate~$\Vv^T$
on $[T,T+\Delta]$ since its covariance structure is known from~\eqref{eq:CovStructure}. 
However, computational complexity aside, with the same grid~$\Tt$ as in Algorithm~\ref{algo:VIXrB}, 
numerical experiments suggest that the determinant of the covariance matrix is equal to zero
when using more than $n_T = 8$ discretisation points. 
Hence, although valid in theory, the Cholesky approach is not feasible numerically. 
This fact implies that there exists strong linear dependence. 
In fact, for any $\eps>0$, the strict inequality
$\text{corr}(\Vv^T_{t_1},\Vv^T_{t_1+\eps})<\text{corr}(\Vv^T_{t_2},\Vv^T_{t_2+\eps})$
holds for all $T<t_1<t_2$, as well as the following:

\begin{proposition}\label{prop:CorrelEps}
The limit $\lim\limits_{\eps\downarrow 0}\text{corr}(\Vv^T_{t},\Vv^T_{t+\eps})=1$
holds for any $t\in[T,T+\Delta]$.
\end{proposition}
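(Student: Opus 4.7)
My plan is to show that $\Vv^T_{t+\eps}\to\Vv^T_t$ in $L^2(\Omega)$ as $\eps\downarrow 0$; the correlation limit then follows from standard Cauchy--Schwarz manipulations.

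By the It\^o isometry applied to the Gaussian integrals defining $\Vv^T$,
$$
\EE\bigl[(\Vv^T_{t+\eps}-\Vv^T_t)^2\bigr]
 = \int_0^T\bigl[(t+\eps-u)^{\Hm}-(t-u)^{\Hm}\bigr]^2\D u
 = \int_{t-T}^{t}\bigl[(v+\eps)^{\Hm}-v^{\Hm}\bigr]^2\D v,
$$
after the change of variable $v=t-u$. For $t\in(T,T+\Delta]$ the lower endpoint $t-T$ is strictly positive, the integrand converges pointwise to $0$ and is dominated (uniformly for $\eps$ small) by an integrable function on $[t-T,t]$, so dominated convergence yields the limit $0$. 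For the boundary case $t=T$, I would split the integral at $v=\eps$: on $[0,\eps]$ the bound $|(v+\eps)^{\Hm}-v^{\Hm}|\leq v^{\Hm}$ (valid since $\Hm<0$) gives a contribution of order $\int_0^\eps v^{2H-1}\D v=O(\eps^{2H})$; on $[\eps,T]$ the mean value theorem yields $|(v+\eps)^{\Hm}-v^{\Hm}|\leq (\tfrac12-H)v^{H-3/2}\eps$, and the contribution is of order $\eps^2\int_\eps^T v^{2H-3}\D v=O(\eps^{2H})$. Both tend to $0$.

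To conclude, the random variables $\Vv^T_s$ are centred Gaussian, with $\VV(\Vv^T_t)>0$ for $t\in[T,T+\Delta]$ and $T>0$ by~\eqref{eq:VtTVariance}. Cauchy--Schwarz combined with the $L^2$-convergence gives
$$
\bigl|\EE[\Vv^T_t\Vv^T_{t+\eps}]-\EE[(\Vv^T_t)^2]\bigr|
 = \bigl|\EE[\Vv^T_t(\Vv^T_{t+\eps}-\Vv^T_t)]\bigr|
 \leq \sqrt{\VV(\Vv^T_t)\,\EE[(\Vv^T_{t+\eps}-\Vv^T_t)^2]}\to 0,
$$
and the elementary inequality $|\sqrt{a}-\sqrt{b}|\leq\sqrt{|a-b|}$ applied to the variances yields $\VV(\Vv^T_{t+\eps})\to\VV(\Vv^T_t)$, so that the ratio defining $\text{corr}(\Vv^T_t,\Vv^T_{t+\eps})$ tends to~$1$. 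The main obstacle is the boundary case $t=T$, where the kernel $(T-u)^{\Hm}$ has an integrable singularity at $u=T$ and one must bound the $L^2$-distance carefully; away from this boundary the argument is immediate from dominated convergence.
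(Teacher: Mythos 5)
Your proof is correct, and it follows the same overall reduction as the paper --- both arguments boil down to showing $L^2$-continuity of $s\mapsto\Vv^T_s$ via the isometry $\EE[(\Vv^T_{t+\eps}-\Vv^T_t)^2]=\int_0^T[(t+\eps-u)^{\Hm}-(t-u)^{\Hm}]^2\D u$ --- but the key analytic step is carried out quite differently. The paper evaluates this integral in closed form in terms of the hypergeometric function $\text{}_2F_1\left(-\Hp,1+\Hp,2+\Hp,-T/\eps\right)$ and then sends $\eps\downarrow 0$ by invoking the connection formula for $\text{}_2F_1$ at infinity together with its series expansion; moreover, the displayed computation there is really the boundary case $t=T$. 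You instead use elementary real-variable estimates: dominated convergence for $t>T$ (where the kernel is nonsingular on $[t-T,t]$), and for $t=T$ a split of the integral at $v=\eps$ giving an explicit $\Oo(\eps^{2H})$ bound on each piece --- your two estimates there are correct, since $\Hm<0$ gives $|(v+\eps)^{\Hm}-v^{\Hm}|\leq v^{\Hm}$ on $[0,\eps]$ and the mean value theorem gives the $v^{H-3/2}\eps$ bound on $[\eps,T]$. Your route buys three things: it avoids any special-function machinery, it produces a rate of convergence rather than just a limit, and it treats every $t\in[T,T+\Delta]$ uniformly rather than only the singular endpoint. You also spell out the passage from $L^2$-continuity to $\text{corr}\to 1$ (which the paper dismisses as ``readily follows'') and correctly flag that one needs $\VV(\Vv^T_t)>0$, i.e. $T>0$, for the correlation to be defined; the minor imprecision in how you deduce $\VV(\Vv^T_{t+\eps})\to\VV(\Vv^T_t)$ is harmless, since this also follows directly from the triangle inequality for the $L^2$-norm or from the explicit formula~\eqref{eq:VtTVariance}. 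What the paper's closed-form computation buys in exchange is an exact expression for the $L^2$-increment, which is of independent interest but is not needed for the proposition.
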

\begin{proof}
This follows readily from the continuity in~$\L^2$ of the map $t \mapsto \Vv^T_{t}$:
\begin{align*}
E[(\Vv^T_{t+\eps} - \Vv^T_{t})^2] &= \int_0^T [(t+\eps-u)^{\Hm}- (t-u)^{\Hm}]^2 \D u\\
 & = \left[\frac{(T+\eps)^{1+2\Hp}-\eps^{1+2\Hp}+T^{1+2\Hp}}{1+2\Hp}
  - \frac{2T^{1+\Hp}\eps^{\Hp}}{1+\Hp}\text{}_2F_1\left(-\Hp,1+\Hp,2+\Hp,-\frac{T}{\eps}\right)\right].
\end{align*}
Applying the identity~\cite[page 564]{Abramowitz-Stegun}
$$
\text{}_2F_1(a,b,c;z)
 = \frac{\Gamma(c)\Gamma(b-a)}{\Gamma(b)\Gamma(c-a)}(-z)^{-a}\text{}_2F_1\left(a,a-c+1,a-b+1;\frac{1}{z}\right) 
 + \frac{\Gamma(c)\Gamma(a-b)}{\Gamma(a)\Gamma(c-b)}\frac{1}{(-z)^{b}}\text{}_2F_1\left(b,b-c+1,b-a+1;\frac{1}{z}\right)
$$
to the case $(a, b, c) = (-\Hp,1+\Hp,2+\Hp)$ and using properties of the Gamma function along with the fact that $\text{}_2F_1\left(a,0,c,z\right)\equiv 1$, we obtain
$$
\text{}_2F_1\left(-\Hp,1+\Hp,2+\Hp,-\frac{T}{\eps}\right)
 = \frac{1+\Hp}{1+2\Hp}\left(\frac{T}{\eps}\right)^{\Hp}\text{}_2F_1\left(-\Hp,2+2\Hp,-2\Hp,\frac{-\eps}{T}\right).
 $$
Finally, the series representation of~$\text{}_2F_1$~\cite[Chapter 15.1.1]{Abramowitz-Stegun} implies that 
$\text{}_2F_1\left(-\Hp,2+2\Hp,-2\Hp,\frac{-\eps}{T}\right)$ converges to~$1$ as~$\eps$ tends to zero,
and hence that $E[(\Vv^T_{t+\eps} - \Vv^T_{t})^2]$ tends to zero.
\end{proof}

In light of Proposition~\ref{prop:CorrelEps}, 
we model exactly the dependence structure on the first 8 grid points $t_1,\ldots,t_8$, 
then truncate the Cholesky decomposition and compute the correlations
$\rho_{i}:=\text{corr}(\Vv^T_{t_{8+i}},\Vv^T_{t_{8+i+1}})$, for $i=0,\ldots,n-9$ 
to approximate the process by adequately rescaling and correlating each pair of subsequent grid points.
In contrast to the hybrid + forward Euler scheme, the computational complexity is much lower, 
since the Cholesky method is truncated with only 8 components.
The VIX simulation algorithm therefore reads as follows:
\begin{algorithm}[VIX simulation (truncated Cholesky)]
Fix a grid $\mathfrak{T}=\{\tau_j\}_{j=0,...,N}$ on $[T,T+\Delta]$,
\begin{enumerate}[(i)]
\item compute the covariance matrix of  $(\Vv_{\tau_j}^T)_{i=j,\ldots,8}$
 using~\eqref{eq:CovStructure};
\item generate $\{\Vv^T_{\tau_j}\}_{j=1,...,N}$  by correlating and rescaling using~\eqref{eq:CovStructure}:
$$
\Vv^T_{\tau_j}
 = \sqrt{\VV(\Vv^T_{\tau_j})}
 \left(\frac{\rho(\Vv^T_{\tau_{j-1}},\Vv^T_{\tau_j})\Vv^T_{\tau_{j-1}}}{\sqrt{\VV(\Vv^T_{\tau_{j-1}})}}+\sqrt{1-\rho(\Vv^T_{\tau_{j-1}},\Vv^T_{\tau_j})^2}\Nn(0,1)\right),
 \quad\text{for }j=9,\ldots,N;
$$
\item compute the VIX via numerical integration as in Algorithm~\ref{algo:VIXrB}(4).
\end{enumerate}
\end{algorithm}

\subsubsection{Numerical experiment}
We compute the price of VIX Futures using the simulation algorithm 
introduced in the previous section. 
We set the same parameters as in~\cite{BFG15} and~\cite{BLP15}:
\begin{equation}\label{eq:parameters}
\xi_0 = 0.235^2,\qquad\qquad
H = 0.07,\qquad\qquad
\nu = 1.9\frac{C_H\sqrt{2H}}{2}\approx 1.2287,\qquad\qquad
\kappa = 2.
\end{equation}
We perform $10^5$ simulations for the hybrid scheme + forward Euler (HS+FE) method, 
while $10^6 $ simulations are used for the truncated Cholesky.
Figure~\ref{sim VIX} suggests that both methods agree qualitatively and converge to a similar output. 
In particular, the truncated Cholesky approach seems to suffer from larger oscillations as maturity increases, 
even with~$10^6$ simulations. 
Nevertheless, Figure~\ref{sim VIX} indicates that 
the Monte-Carlo variance increases in~$T$ for both schemes, 
which is confirmed in Figure~\ref{MC_maturities}, where the error also increases with maturity. 
On the other hand, Figure~\ref{MC_error} shows that the HS+FE methods is slower 
than the truncated Cholesky method, 
which is consistent with the computational complexities discussed in the previous section. 
In particular, the computational time of the Cholesky method is almost constant when using parallel computing. 
Figure~\ref{MC_error} also suggests that large simulations are needed to obtain precise prices.
In light of this analysis, both methods seem to approximate 
the required output in a decent manner. 
Even if the truncated Cholesky approach gives a considerably fast output for each maturity, 
it is not considered for calibration, since its computational time grows linearly in the number of maturities, 
making the algorithm too slow for reasonable calibration.
Instead, we will use the truncated Cholesky approach as a benchmark for the upcoming approximations.

\begin{figure}[h]
\centering
\includegraphics[scale=0.2]{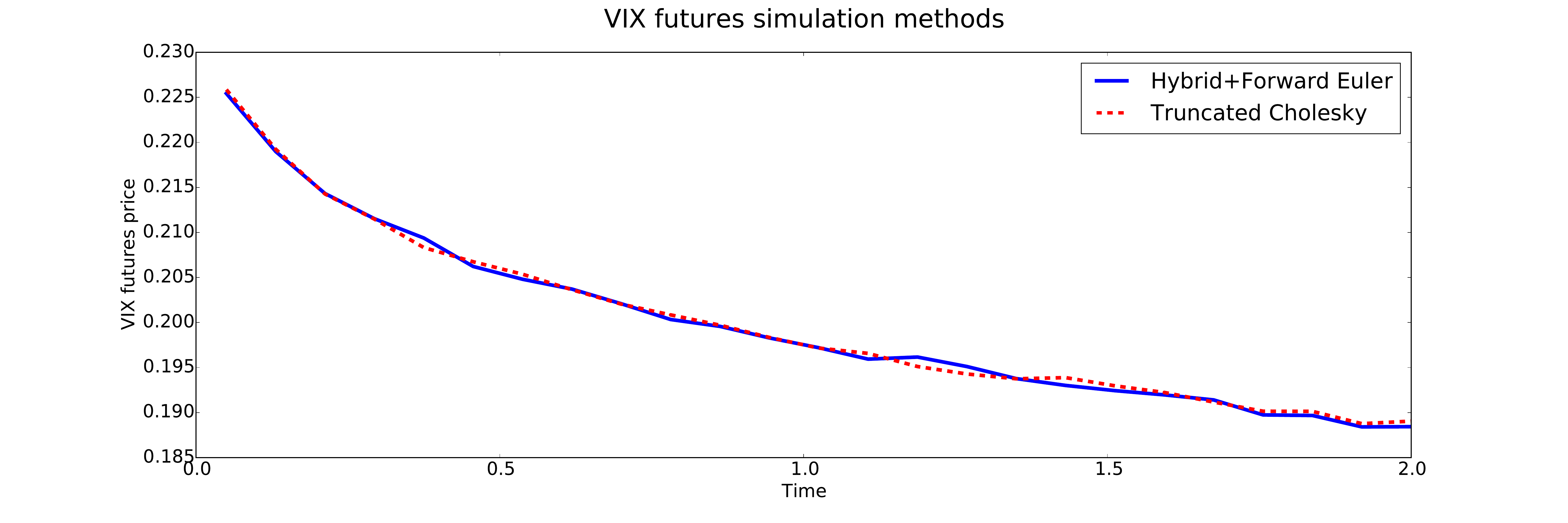}
\caption{VIX Futures using HS+FE and truncated Cholesky}
\label{sim VIX}
\end{figure}

\begin{figure}[h]
\centering
\includegraphics[scale=0.2]{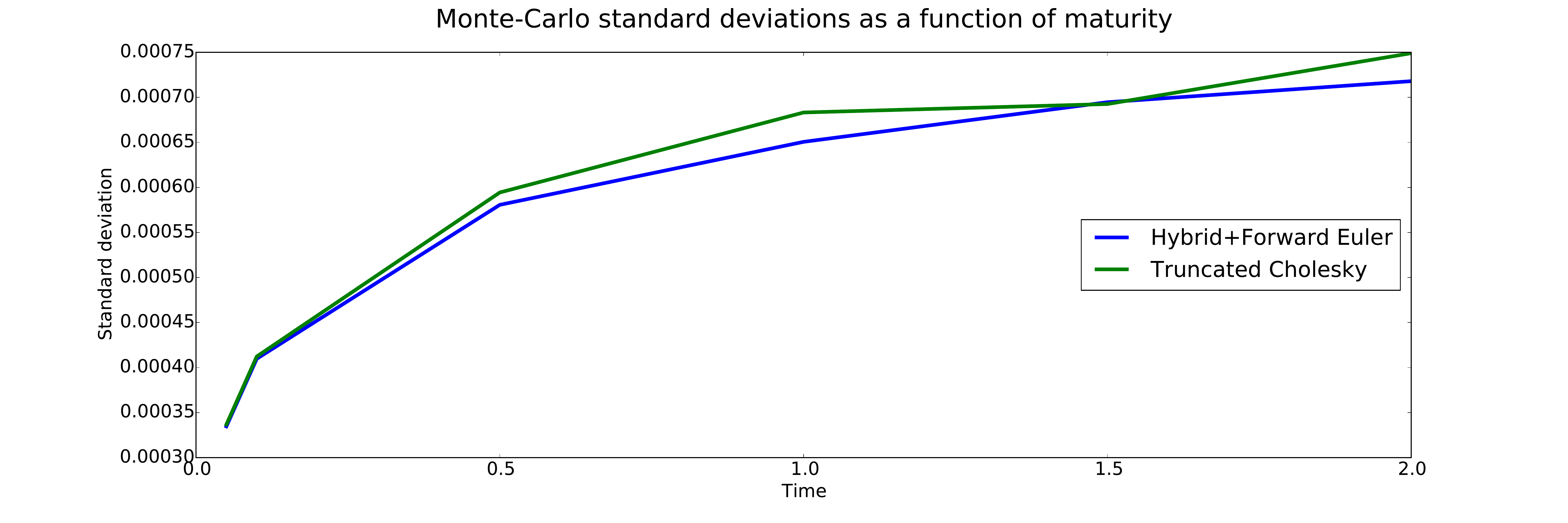}
\caption{Monte-Carlo standard deviations, with $10^5$ simulations.}
\label{MC_maturities}
\end{figure}

\begin{figure}[h]
\centering
\includegraphics[scale=0.2]{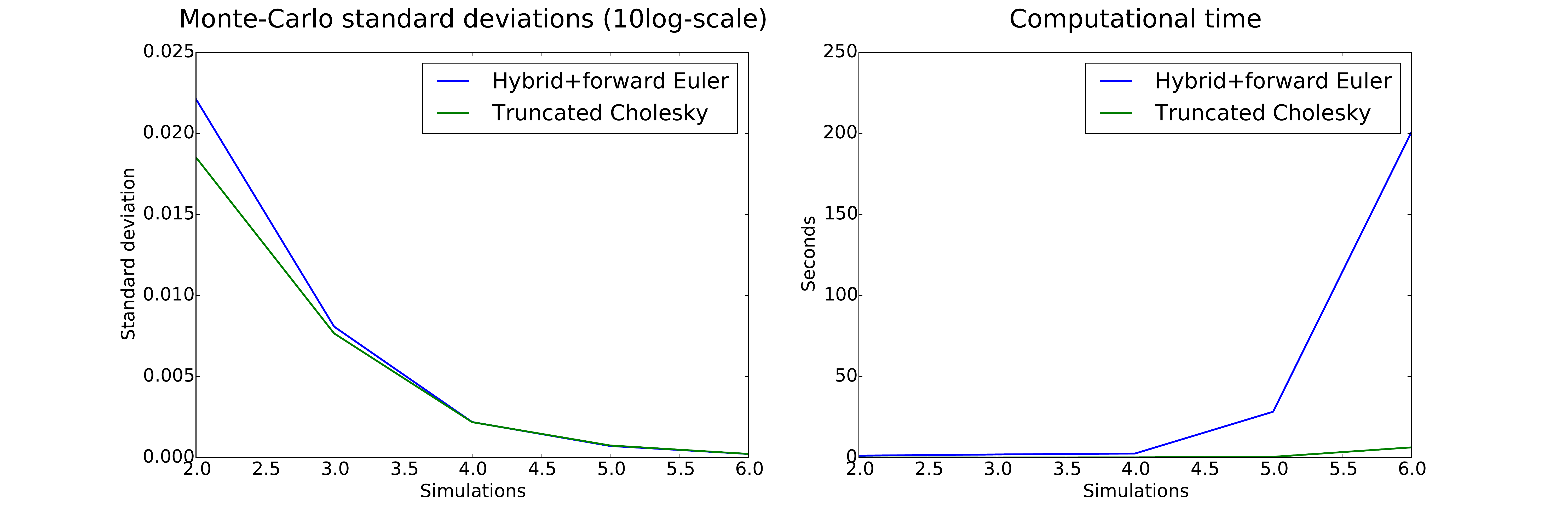}
\caption{Monte-Carlo standard deviations and computational times for a fixed maturity ($T=2$ years) 
for both methods using efficient parallel computing.}
\label{MC_error}
\end{figure}

\subsection{The VIX process and log-normal approximations}\label{sec:LNVIX}
We now investigate approximate methods to price VIX futures and options. 
We define the $\Ff_T$-measurable random variable
$\Delta\VIX^2_T = \int_T^{T+\Delta}\xi_T(t)\D t$.
In~\cite{BFG15}, Bayer, Friz and Gatheral assumed that $\log(\Delta\VIX^2_T)$ follows a Gaussian distribution,
and computed directly its first and second moments, which hence fully characterise the distribution of~$\Delta\VIX^2_T$.
However, since the sum of log-normal random variables is not (in general) log-normal, 
the fact that~$\xi_T$ is log-normal does not imply that~$\Delta\VIX^2_T$ is.
In a different context (geometric Brownian motion and Asian options), 
Dufresne~\cite{Dufresne} proved that, under certain conditions,
an integral of log-normal variables asymptotically converges to a log-normal. 
This approximation has been widely used for many applications~\cite{Dufresne}, 
and Dufresne's result motivates Bayer-Friz-Gatheral's assumption. 
We provide here exact formulae for the mean and variance of this distribution, 
and compare them numerically to those by Bayer-Friz-Gatheral.

\begin{proposition}\label{prop:VIXLN}
The following holds:
\begin{align*}
\sigma^2 & :=\VV(\log(\Delta \VIX^2_T)) =  - 2\log\EE(\Delta \VIX^2_T) + \log\EE\left((\Delta \VIX^2_T)^2\right),\\
\mu  & := \EE(\log(\Delta \VIX^2_T)) = \log\EE(\Delta \VIX^2_T)-\frac{\sigma^2}{2}.
\end{align*}
Furthermore,  
$\EE(\Delta \VIX^2_T) = \int_T^{T+\Delta}\xi_0(t)\D t$ and 
\begin{equation}\label{eq:variance}
\EE\left[(\Delta \VIX^2_T)^2\right]
 = \int_{[T, T+\Delta]^2}\xi_0(u)\xi_0(t)
\exp\left\{\frac{\nu^2C_H^2}{H}\left[(u-T)^{2H}+(t-T)^{2H}-u^{2H}-t^{2H}\right]\right\}
\E^{\overline{\Theta}_{u,t}}\D u \D t.
\end{equation}
where $\overline{\Theta}_{u,t}$ is equal to zero if $u=t$ and otherwise equal to
$\Theta_{u\vee t,u \wedge t}$, where
\begin{align*}
\Theta_{u,t}
  := 2\nu^2C^2_H\left\{\frac{u^{2H}-(u-T)^{2H}+t^{2H}-(t-T)^{2H}}{2H}
  + 2\frac{(u - t)^{\Hm}}{\Hp}
  \left[ t^{\Hp}\F\left(\frac{-t}{u - t}\right) -(t - T)^{\Hp} \F\left(\frac{T-t}{u - t}\right)\right]\right\}.
\end{align*}
\end{proposition}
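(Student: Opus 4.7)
The plan is to reduce everything to computing the first two moments of $\Delta\VIX_T^2$, and then invoking the log-normal ansatz from Bayer--Friz--Gatheral to match those moments.

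First, the formulae for $\mu$ and $\sigma^2$ are a pure moment-matching identity: if $\log Y \sim \Nn(\mu,\sigma^2)$, then $\EE[Y]=\exp(\mu+\sigma^2/2)$ and $\EE[Y^2]=\exp(2\mu+2\sigma^2)$, so solving this linear system in $(\mu,\sigma^2)$ immediately gives the stated expressions with $Y=\Delta\VIX_T^2$. This requires no information about the model; it is just the parametrisation of the log-normal family.

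Second, for $\EE(\Delta\VIX_T^2)$, I apply Fubini to interchange the expectation and the integral on $[T,T+\Delta]$, and then use the martingale property $\EE(\xi_T(t)\mid\Ff_0)=\xi_0(t)$ recalled just after~\eqref{eq:VtT}. This gives $\EE(\Delta\VIX_T^2)=\int_T^{T+\Delta}\xi_0(t)\D t$ directly.

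The substantive computation is the second moment. I would start from
$$
\EE[(\Delta\VIX_T^2)^2]=\int_{[T,T+\Delta]^2}\EE[\xi_T(u)\xi_T(t)]\D u\,\D t,
$$
using Fubini--Tonelli once more. Then, applying Proposition~\ref{prop:xiMartingale},
$$
\xi_T(u)\xi_T(t)=\xi_0(u)\xi_0(t)\,\eta_T(u)\eta_T(t)\,\exp\!\left(\tfrac{\nu^2C_H^2}{H}\left[(u-T)^{2H}+(t-T)^{2H}-u^{2H}-t^{2H}\right]\right),
$$
where $\eta_T(u)\eta_T(t)=\exp\bigl(2\nu C_H(\Vv_u^T+\Vv_t^T)\bigr)$. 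Since $\Vv_u^T+\Vv_t^T$ is centred Gaussian, the moment generating function formula gives
$$
\EE\bigl[\eta_T(u)\eta_T(t)\bigr]=\exp\!\left(2\nu^2C_H^2\,\VV(\Vv_u^T+\Vv_t^T)\right)=\exp\!\left(2\nu^2C_H^2\left[\VV(\Vv_u^T)+\VV(\Vv_t^T)+2\EE(\Vv_u^T\Vv_t^T)\right]\right).
$$
Plugging in the variance formula~\eqref{eq:VtTVariance} for the two diagonal terms produces exactly the first summand $\frac{u^{2H}-(u-T)^{2H}+t^{2H}-(t-T)^{2H}}{2H}$ inside $\Theta_{u,t}$; plugging in the covariance formula~\eqref{eq:CovStructure} (assuming the ordering $u\wedge t<u\vee t$) produces the hypergeometric term $2\frac{(u-t)^{H_+}}{H_+}\bigl[t^{H_+}\F(-t/(u-t))-(t-T)^{H_+}\F((T-t)/(u-t))\bigr]$ after relabelling $t\mapsto u\wedge t$, $s\mapsto u\vee t$. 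Symmetry of $\EE[\xi_T(u)\xi_T(t)]$ in $(u,t)$ is what forces the definition $\overline{\Theta}_{u,t}=\Theta_{u\vee t,u\wedge t}$ so that the formula covers both orderings with a single expression; the diagonal $u=t$ has Lebesgue measure zero and so the convention $\overline{\Theta}_{u,u}=0$ is immaterial.

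The only mild obstacle is bookkeeping: one must keep track of the ordering required by~\eqref{eq:CovStructure} (so that the arguments of $\F$ remain in $\RR_-$ and the hypergeometric representation is valid), and one must check that the apparent singularity of the kernel at $u=t$ is harmless for the integral, which follows from the $\L^2$-continuity of $t\mapsto\Vv_t^T$ established in the proof of Proposition~\ref{prop:CorrelEps}. No new analytic estimates are needed beyond what has already been set up in Section~\ref{sec:rBergomi}.
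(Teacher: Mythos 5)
Your proposal is correct and follows essentially the same route as the paper: Fubini plus the martingale property for the first moment, and for the second moment the representation of Proposition~\ref{prop:xiMartingale} combined with the Gaussian moment generating function applied to $\Vv_u^T+\Vv_t^T$ (the paper packages this sum as the stochastic integral $\vartheta_{u,t}$, but it is the identical computation), with \eqref{eq:VtTVariance} and \eqref{eq:CovStructure} supplying the two pieces of $\Theta_{u,t}$. Your explicit moment-matching derivation of $\mu$ and $\sigma^2$ and your remark on the ordering needed to keep the argument of $\F$ in $\RR_-$ are both consistent with what the paper states (the latter in a remark rather than in the proof).
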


\begin{remark}
Since all the integrals in the proposition are computed over compact intervals, 
they are finite as long as~$\xi_0$ is well behaved on $[T, T+\Delta]$.
Assuming this is indeed the case is not restrictive in practice as~$\xi_0$ represents the initial forward variance curve; 
note in particular that continuity of~$\xi_0$ is sufficient.
\end{remark}

\begin{remark}
The reason why $\overline{\Theta}_{u,t}$ is defined that way is for numerical purposes.
Indeed, when $u<t$, $\Theta_{u,t}$ is not well defined since~$\F(x)$ only makes sense when $x\leq1$ 
(details on the radius of convergence of hypergeometric functions can be found in~\cite{Abramowitz-Stegun}[Page 556]), even though the integral representation~\eqref{eq:variance} is still well defined. 
However, most numerical packages implement hypergeometric functions via series expansions.
The trick from $\Theta_{u,t}$ to $\overline{\Theta}_{u,t}$ allows us to bypass this issue.
\end{remark}

\begin{proof}[Proof of Proposition~\ref{prop:VIXLN}]
The expectation follows directly from the tower property and Fubini's theorem.
For the second moment, we use the decomposition
$$
\EE\left((\Delta \VIX_T^2)^2\right)
 = \EE\left(\int_T^{T+\Delta}\int_T^{T+\Delta}\xi_T(u)\xi_T(t)\D t \D u\right)
 = \int_T^{T+\Delta}\int_T^{T+\Delta}\EE\left(\xi_T(u)\xi_T(t)\right)\D t \D u
$$
where in the last step we use that $\xi_T(s)$ is $\mathcal{F}_T$-measurable in order to apply Fubini.
Using the representation obtained in Proposition~\ref{prop:xiMartingale}, we get
\begin{equation}\label{expect}
\EE\left(\xi_T(u)\xi_T(t)\right)
 =  \xi_0(u)\xi_0(t)\exp\left\{\frac{\nu^2C_H^2}{H}
 \left[ (u-T)^{2H}-u^{2H} + (t-T)^{2H}-t^{2H}\right]\right\}\EE\left(\E^{\vartheta_{u,t}}\right),
\end{equation}
where the random variable
$$
\vartheta_{u,t} := 2\nu C_{H}\int_{0}^{T}\left[(u-s)^{\Hm} + (t-s)^{\Hm} \right]\D Z_s
$$
is Gaussian with zero expectation and variance 
\begin{align*}
\VV(\vartheta_{u,t})
 & = 4\nu^2C^2_H\left(\frac{u^{2H}-(u-T)^{2H}+t^{2H}-(t-T)^{2H}}{2H}
  + 2\int_{0}^{T}(u-s)^{\Hm}(t-s)^{\Hm}\D s  \right)\\
 & = 4\nu^2C^2_H\left\{\frac{u^{2H}-(u-T)^{2H}+t^{2H}-(t-T)^{2H}}{2H}
  + \frac{2(u - t)^{\Hm}}{\Hp}
  \left[t^{\Hp}\F\left(\frac{-t}{u - t}\right) - (t - T)^{\Hp} \F\left(\frac{T-t}{u - t}\right)\right]\right\}.
\end{align*}
Then,
\begin{align*}
\EE\left(\xi_T(u)\xi_T(t)\right)
 = \xi_0(u)\xi_0(t)\exp\left(\frac{\nu^2C_H^2}{H}\left((u-T)^{2H}+(t-T)^{2H}-u^{2H}-t^{2H}\right)\right)
\exp\left(\frac{\VV(\vartheta_{u,t})}{2}\right),
\end{align*}
and the second moment follows from the immediate computation
\begin{align*}
\EE\left((\Delta \VIX_T)^2\right)
 & = \int_T^{T+\Delta}\int_T^{T+\Delta}\EE\left[\xi_T(u)\xi_T(t)\right]\D u \D t\\
 & =\int_T^{T+\Delta}\int_T^{T+\Delta} \xi_0(u)\xi_0(t)
\exp\left\{\frac{\nu^2C_H^2}{H}\left[(u-T)^{2H}+(t-T)^{2H}-u^{2H}-t^{2H}\right]\right\}
\E^{\frac{1}{2}\VV(\vartheta_{u,t})}\D u \D t,
\end{align*}
after defining $\Theta_{u,t} := \exp\left(\frac{1}{2}\VV(\vartheta_{u,t})\right)$.
\end{proof}

In order to provide closed-form expressions for VIX Futures and options, 
we enforce the following assumption:
\begin{assumption}\label{ass:VIXLN}
$\Delta \VIX^2_T$ is log-normal.
\end{assumption}
In fact, this is almost the same as the assumption by Bayer, Friz and Gatheral~\cite{BFG15};
however, they did not compute the variance exactly as in Proposition~\ref{prop:VIXLN}, 
and instead considered the lognormal approximation
\begin{assumption}[Bayer-Friz-Gatheral]\label{ass:Bayer}
$\log(\Delta \VIX^2_T)$ is Gaussian with mean~$\tmu$ and variance~$\tsigma^2$ given by
$$
\tsigma^2 = \frac{4\nu^2C_H^2}{\Delta^2\Hp^2}\int_0^T\left[(T-s+\Delta)^{\Hp}-(T-s)^{\Hp}\right]^{2}\D s
\qquad\text{and}\qquad
\tmu=\log \int_T^{T+\Delta}\xi_0(t)\D t -\frac{\tsigma^2}{2}.
$$
\end{assumption}

\begin{lemma}
A VIX future is worth 
\begin{equation*}
\Vk_T  = \left\{
 \begin{array}{ll}
 \displaystyle \Delta^{-1/2}\sqrt{\int_T^{T+\Delta}\xi_0(t)\D t} \exp\left(-\frac{\sigma^2}{8}\right),
  & \text{under Assumption~\ref{ass:VIXLN}},\\
 \displaystyle \Delta^{-1/2}\sqrt{\int_T^{T+\Delta}\xi_0(t)\D t} \exp\left(-\frac{\tsigma^2}{8}\right),
  & \text{under Assumption~\ref{ass:Bayer}}.
\end{array}
\right.
\end{equation*}
\end{lemma}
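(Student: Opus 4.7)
The plan is to exploit the identity $\VIX_T = \Delta^{-1/2}\sqrt{\Delta\VIX^2_T}$ and the standard log-normal moment formula. Under Assumption~\ref{ass:VIXLN}, $\log(\Delta\VIX^2_T)$ is Gaussian with mean $\mu$ and variance $\sigma^2$, so $\log\sqrt{\Delta\VIX^2_T} = \tfrac{1}{2}\log(\Delta\VIX^2_T)$ is Gaussian with mean $\mu/2$ and variance $\sigma^2/4$. Applying the moment generating function of a Gaussian then gives
$$
\EE\left(\sqrt{\Delta\VIX_T^2}\,\Big|\,\Ff_0\right) = \exp\left(\frac{\mu}{2} + \frac{\sigma^2}{8}\right).
$$

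Next I would substitute the expression $\mu = \log\EE(\Delta\VIX^2_T) - \sigma^2/2$ from Proposition~\ref{prop:VIXLN}, which yields
$$
\EE\left(\sqrt{\Delta\VIX_T^2}\,\Big|\,\Ff_0\right)
 = \sqrt{\EE(\Delta\VIX^2_T)}\, \exp\left(-\frac{\sigma^2}{8}\right),
$$
and then insert the explicit formula $\EE(\Delta\VIX^2_T) = \int_T^{T+\Delta}\xi_0(t)\D t$ also given in Proposition~\ref{prop:VIXLN}. Dividing by $\sqrt{\Delta}$ produces $\Vk_T = \Delta^{-1/2}\sqrt{\int_T^{T+\Delta}\xi_0(t)\D t}\,\exp(-\sigma^2/8)$, which is exactly the first case of the lemma.

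For the second case, under Assumption~\ref{ass:Bayer}, $\log(\Delta\VIX^2_T)$ is Gaussian with parameters $(\tmu, \tsigma^2)$ and, by construction, $\tmu = \log\int_T^{T+\Delta}\xi_0(t)\D t - \tsigma^2/2$. The computation is mechanically identical to the previous case with $(\mu, \sigma^2)$ replaced by $(\tmu, \tsigma^2)$, so the same manipulation yields $\Vk_T = \Delta^{-1/2}\sqrt{\int_T^{T+\Delta}\xi_0(t)\D t}\,\exp(-\tsigma^2/8)$.

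There is no substantive obstacle here: the lemma is essentially a direct application of the log-normal moment formula to the square root of a log-normal, and the only thing to check is bookkeeping (the factors $1/2$ and $1/8$ stemming from halving the log-mean and quartering the variance). The representations of the mean and variance supplied in Proposition~\ref{prop:VIXLN} and Assumption~\ref{ass:Bayer} handle everything else.
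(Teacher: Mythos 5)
Your proof is correct and follows essentially the same route as the paper: both compute $\EE(\sqrt{\Delta\VIX_T^2})=\exp(\mu/2+\sigma^2/8)$ from the log-normal assumption and then substitute $\mu=\log\int_T^{T+\Delta}\xi_0(t)\D t-\sigma^2/2$; you merely spell out the final substitution that the paper leaves implicit. No issues.
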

\begin{proof}
Since, by assumption
$\Delta^{1/2}\VIX_T\equalDistrib \exp\left(\frac{\mu+\sigma \Nn(0,1)}{2}\right)$, 
the price of a VIX future directly reads
$$
 \EE\left(\VIX_T|\Ff_0\right)
  = \Delta^{-1/2}\EE\left(\Delta^{1/2}\VIX_T\right)
  = \Delta^{-1/2}\exp\left(\frac{\mu}{2}+\frac{\sigma^2}{8}\right),
$$
and the second case follows analogously.
\end{proof}
As opposed to the simulation schemes, 
the log-normal approximation does depend on the qualitative properties 
of the initial forward variance curve~$\xi_0(\cdot)$. 
Hence, different curves should be analysed to check the robustness of the method.
We now exploit the approximation in Assumption~\ref{ass:VIXLN} 
to obtain a closed-form Black-Scholes type formulae for European options on the VIX. 
\begin{lemma}
For $0\leq t\leq T$, let $\Vk_T(t) := \EE\left(\VIX_T|\Ff_t\right)$ denote the price at time~$t$ 
of a VIX future maturing at~$T$. 
Then, under Assumption~\ref{ass:VIXLN}, a European Call option on a VIX future maturing at $T$ is worth
$$
\Cc^\Vv_T := \EE\left[(\Vk_T(T)-K)_+|\Ff_0\right]
 = \Delta^{-1/2}\sqrt{\int_T^{T+\Delta}\xi_0(t)\D t}\exp\left(-\frac{\sigma^2}{8}\right)\Phi(d_1)-K\Phi(d_2),
$$
where $\widetilde{K} := [\log (K^2\Delta)-\log\int_T^{T+\Delta}\xi_0(t)\D t+\sigma^2/2]/\sigma$, 
$d_1 := -\widetilde{K}+\frac{1}{2}\sigma$
and $d_2 := -\widetilde{K}$, with~$\sigma^2$ as in Proposition~\ref{prop:VIXLN}
\end{lemma}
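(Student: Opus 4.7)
Under Assumption~\ref{ass:VIXLN}, the random variable $\Delta \VIX_T^2$ is log-normal with parameters $(\mu,\sigma^2)$ as given in Proposition~\ref{prop:VIXLN}, so that the VIX itself admits the representation
$$
\VIX_T \equalDistrib \Delta^{-1/2}\exp\left(\frac{\mu}{2} + \frac{\sigma}{2}\Nn\right),
$$
with $\Nn\sim\Nn(0,1)$. The plan is to reduce the computation of $\Cc^\Vv_T = \EE[(\VIX_T-K)_+\,|\,\Ff_0]$ to a standard Black-Scholes-type integral against this explicit log-normal law, then simplify using the relationship $\mu = \log\int_T^{T+\Delta}\xi_0(t)\D t - \sigma^2/2$ from Proposition~\ref{prop:VIXLN}.

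First, I would identify the exercise region. The event $\{\VIX_T > K\}$ is equivalent to $\{\Nn > \widetilde{K}\}$ after solving $\frac{\mu}{2}+\frac{\sigma}{2}\Nn > \log(K\sqrt{\Delta})$; substituting the expression for~$\mu$ recovers exactly the threshold $\widetilde{K} = [\log(K^2\Delta)-\log\int_T^{T+\Delta}\xi_0(t)\D t + \sigma^2/2]/\sigma$ from the statement, so that $d_2 = -\widetilde{K}$ arises naturally from $\PP(\Nn>\widetilde{K}) = \Phi(-\widetilde{K})$.

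Next, I would split $\EE[(\VIX_T-K)_+] = \EE[\VIX_T\mathbf{1}_{\Nn>\widetilde K}] - K\,\PP(\Nn>\widetilde{K})$ and apply the classical Gaussian identity
$$
\EE\left[\E^{a\Nn}\mathbf{1}_{\Nn>c}\right] = \E^{a^2/2}\Phi(a-c),
$$
with $a = \sigma/2$ and $c=\widetilde{K}$, yielding the Gaussian weight $\Phi(\sigma/2 - \widetilde{K}) = \Phi(d_1)$ for the first term. The prefactor $\Delta^{-1/2}\exp(\mu/2 + \sigma^2/8)$ should then be simplified: substituting $\mu = \log\int_T^{T+\Delta}\xi_0(t)\D t - \sigma^2/2$ collapses it to $\Delta^{-1/2}\sqrt{\int_T^{T+\Delta}\xi_0(t)\D t}\exp(-\sigma^2/8)$, which is precisely the VIX futures price from the preceding lemma multiplied by $\Phi(d_1)$.

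There is no real obstacle here beyond careful bookkeeping of the factor $1/2$ arising from passing from $\Delta\VIX_T^2$ (parameters $(\mu,\sigma^2)$) to $\VIX_T$ (parameters $(\mu/2,\sigma^2/4)$ after the $\Delta^{-1/2}$ scaling); indeed, the $-\sigma^2/8$ martingale adjustment in the prefactor is the same one that appeared in the VIX futures formula, which serves as a useful sanity check that the call price converges to the futures price as $K\to 0$.
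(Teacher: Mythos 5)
Your proposal is correct and follows essentially the same route as the paper: the paper likewise writes $\EE(\Vk_T(T)-K)_+$ as a Gaussian integral over $\{z>\widetilde K\}$ and completes the square to shift the density by $\sigma/2$, which is exactly your identity $\EE[\E^{a\Nn}\mathbf{1}_{\Nn>c}]=\E^{a^2/2}\Phi(a-c)$ with $a=\sigma/2$. The only difference is presentational --- you make the exercise-region computation and the simplification of the prefactor via $\mu=\log\int_T^{T+\Delta}\xi_0(t)\D t-\sigma^2/2$ explicit, whereas the paper leaves them as ``trivial computations.''
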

\begin{proof}
Under Assumption~\ref{ass:VIXLN}, the lemma follows directly from the following trivial computations:
$$
\EE\left(\Vk_T(T)-K\right)_+
 = \int_{\widetilde{K}}^{\infty}\left[\Delta^{-1/2}\exp\left(\frac{\mu}{2}+\frac{\sigma z}{2}\right)-K\right]_+\phi(z)\D z
 = \frac{1}{\sqrt{\Delta}}\exp\left(\frac{\mu}{2}+\frac{\sigma^2}{8}\right)\int_{\widetilde{K}}^{\infty}\phi\left(z-\frac{\sigma}{2}\right)\D z-K\Phi(d_2).
 $$
\end{proof}

\subsubsection{Numerical tests of the log-normal approximation}
We perform a numerical analysis of the approximation by pricing VIX Futures using the parameters 
in~\eqref{eq:parameters}. 
We consider $10^6$ simulations and the three qualitative scenarios introduced in \eqref{eq:scenarios} for the forward variance curve.
\begin{figure}[h]
\centering
\includegraphics[scale=0.2]{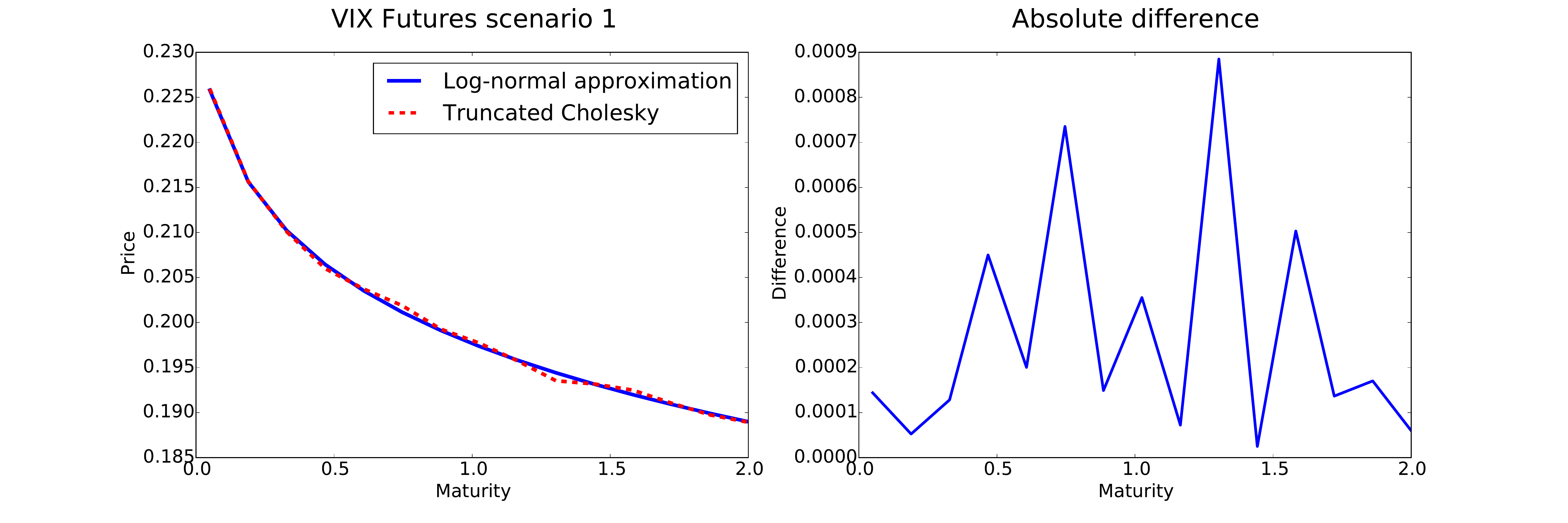}
\includegraphics[scale=0.2]{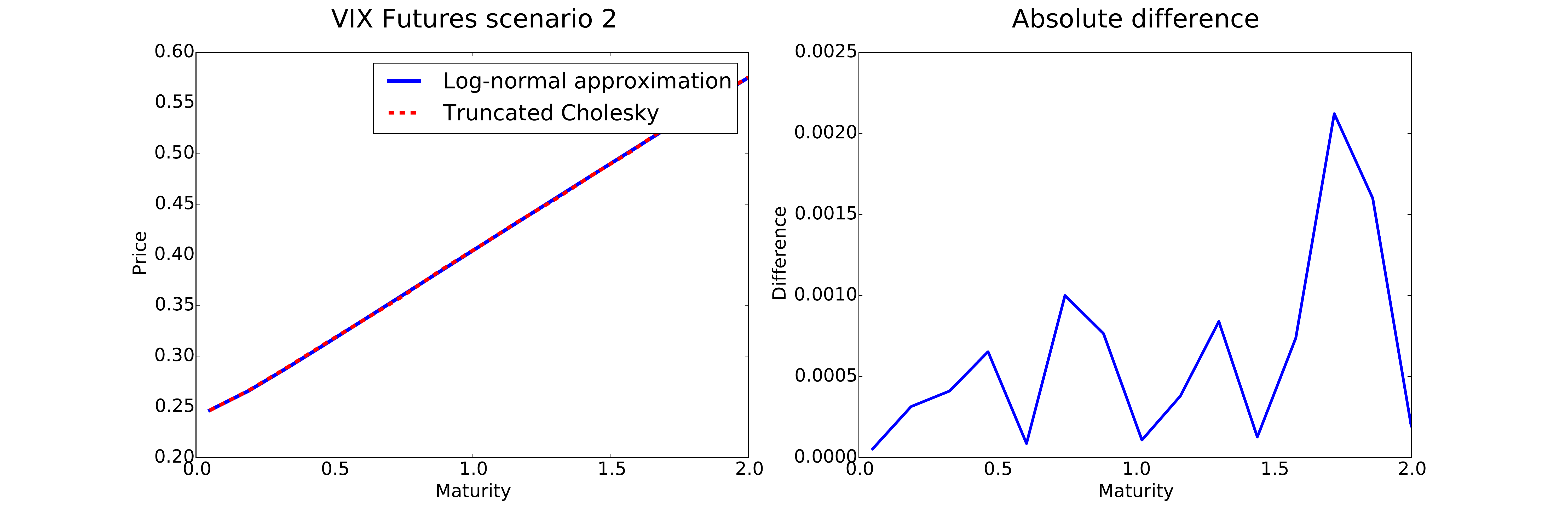}
\includegraphics[scale=0.2]{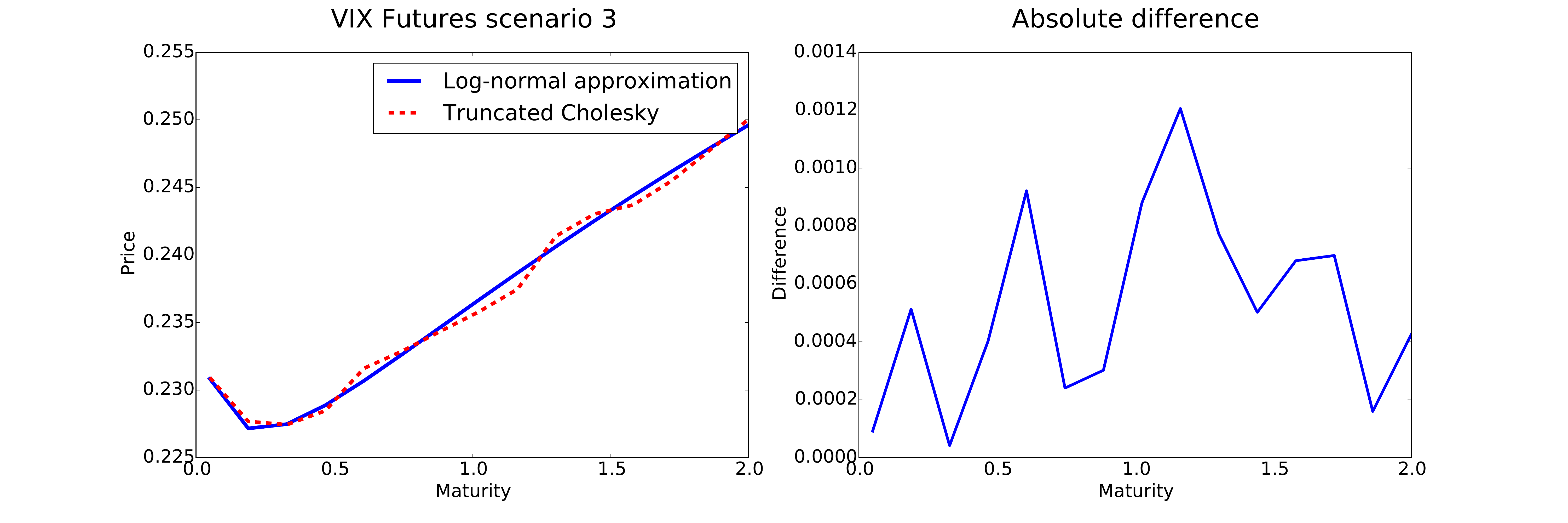}
\caption{Log-normal approximations vs. Monte Carlo (Truncated Cholesky) in all three scenarios.}
\label{fig:log_approx}
\end{figure}
Figures~\ref{fig:log_approx} suggest that the approximation is accurate being the difference of  order $10^{-3}$ or less. The oscillating nature of the difference is also a good sign since it is probably caused by the Monte Carlo error and does not show any monotonicity. Furthermore, the method shows to be robust for different type of curves $\xi_0(\cdot)$. Moreover, the log-normal approximation seems to converge to the true mean, avoiding the oscillations of the Truncated Cholesky method. Therefore, we may conclude that the log-normal approximation produces a good output, with desired smoothness properties.
On the other hand, we also analyse European VIX Call options repeating the previous parameters and considering the flat forward variance curve from Scenario~$1$.
\begin{figure}[h]
\centering
\includegraphics[scale=0.2]{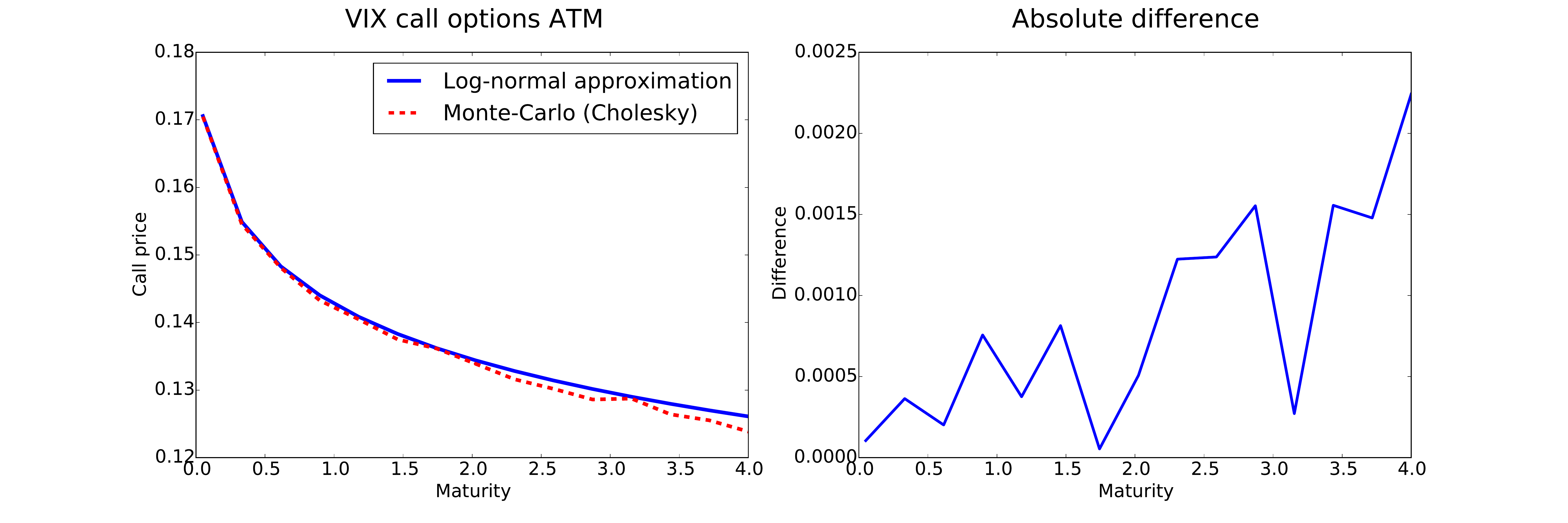}

\caption{Log-normal approximation vs. Monte Carlo (Truncated Cholesky) with $10^6$ simulations}
\label{fig:log_options_ATM}
\end{figure}
Figure ~\ref{fig:log_options_ATM} shows that the log-normal approximation is accurate for at-the-money options. 
Nevertheless, the difference increases with maturity, and hence one must be careful 
when using this approximation to price options with long maturities.
However, in practice, on Equity markets, liquid maturities are only up to two years.

\subsubsection{Numerical tests}
We benchmark Bayer-Friz-Gatheral's approximation against ours with the parameters 
in~\eqref{eq:parameters}. 
Figures~\ref{our_vs_BFG} and~\ref{our_vs_BFG_variance} suggest that both are similar. 
In particular, we observe that the variance deviates as maturity increases. 
Nevertheless, for practical purposes, both approximations agree over a four-year horizon, 
long enough to cover available Futures data.
Many functional forms of~$\xi_0(t)$ were also tested giving similar results as the ones shown 
in Figures~\ref{our_vs_BFG} and~\ref{our_vs_BFG_variance}. 
We conclude that the approximation by Bayer, Friz and Gatheral is accurate enough for practical purposes
and yields and reduces significantly the computational costs 
since the computation of~$\tsigma^2$ involves a single integral 
while our approximation requires a double integral (for~$\sigma^2$). 
In particular, in order to generate Figure~\ref{our_vs_BFG} our approximation was 40 times slower than Bayer, Friz and Gatheral's.

\begin{figure}[h]
\centering
\includegraphics[scale=0.2]{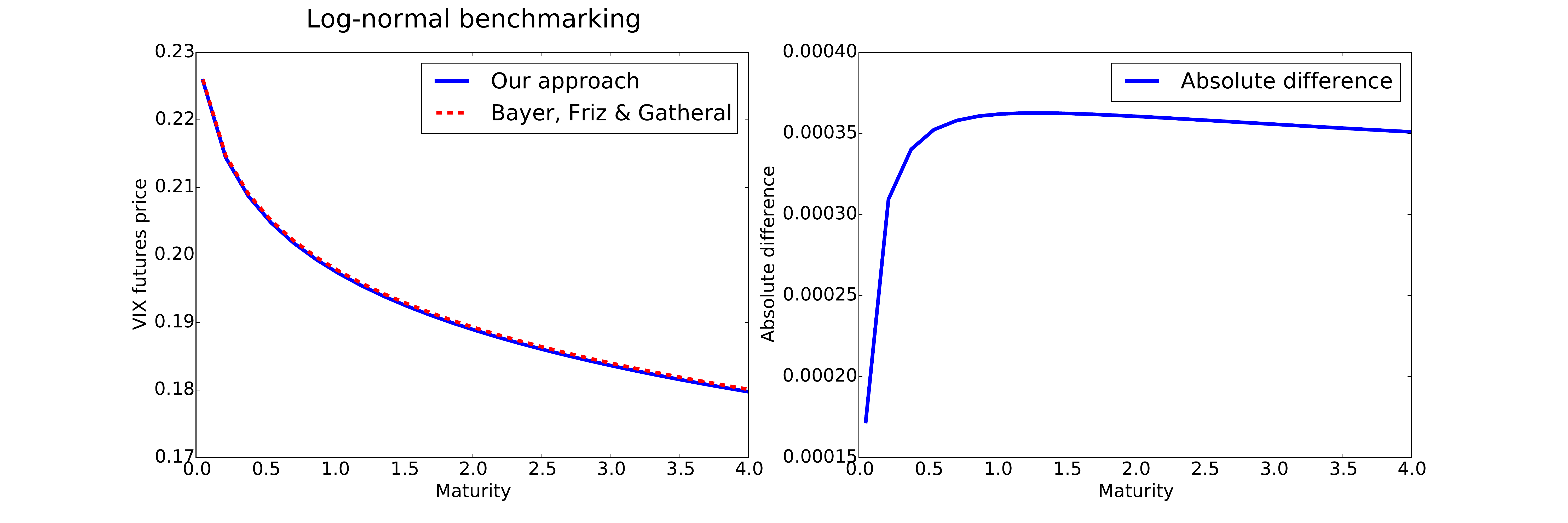}

\caption{VIX Futures prices following Assumption~\ref{ass:VIXLN} and Assumption~\ref{ass:Bayer}.}
\label{our_vs_BFG}
\end{figure}
\begin{figure}[h]
\centering
\includegraphics[scale=0.2]{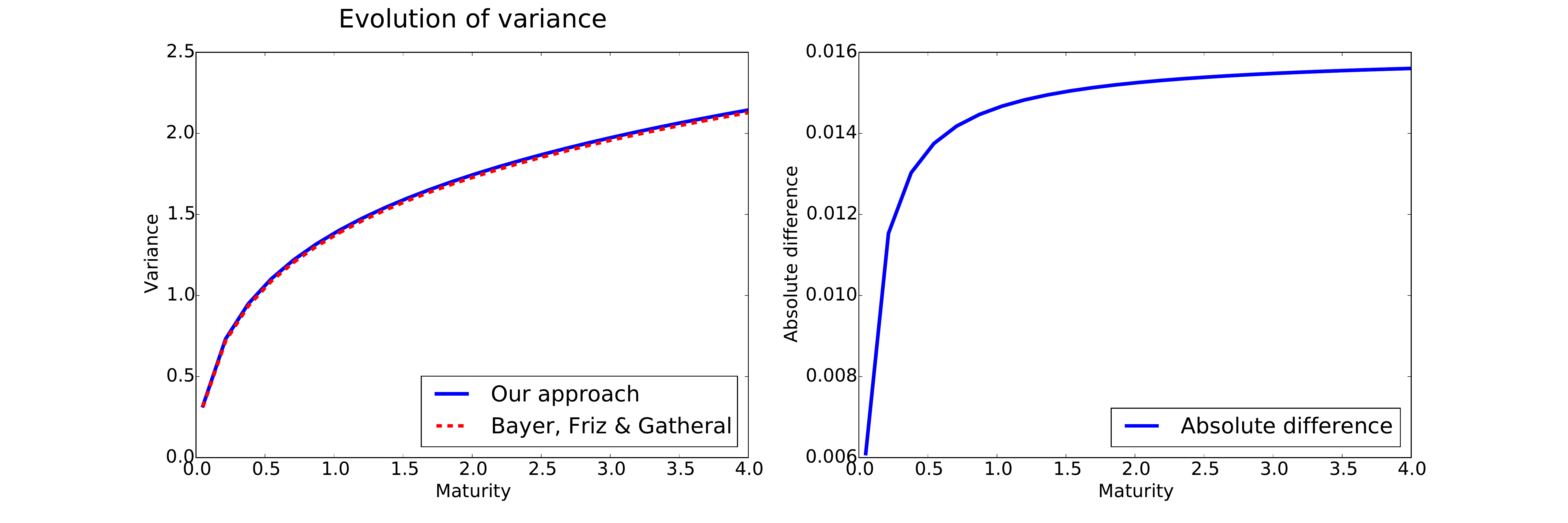}

\caption{Comparison of the variance following Assumption~\ref{ass:VIXLN} ($\sigma^2$) and Assumption~\ref{ass:Bayer} ($\tsigma^2$).}
\label{our_vs_BFG_variance}
\end{figure}

\subsection{VIX Futures calibration in rBergormi}
In light of the promising results obtained in the previous sections, we create a calibration algorithm based on the log-normal approximation by Bayer, Friz and Gatheral~\cite{BFG15}. Even if our approximation seems to be more accurate the computational cost is much larger and the difference between both approaches has been shown to be very small. Moreover, the approach by Bayer, Friz and Gatheral allows to compute the gradient of the objective function in a semi-explicit form, which in terms of optimisation and calibration is extremely useful.

\subsubsection{Objective function}
To calibrate the model to VIX Futures, we first define the objective function
\begin{equation}\label{eq:ObjectFunc}
\Ll^{\Fk}(\nu,H) := \sum_{i=1}^{N} (\Vk_{T_i}-\Fk_i)^2,
\end{equation}
which we minimise over $(\nu, H)$.
Here $(\Fk_i)_{i=1,\ldots,N}$ are the observed Futures prices on the time grid $T_1<\ldots<T_N$,
$\Vk_{T_i} = \Delta^{-1/2}\sqrt{\int_{T_i}^{T_i+\Delta}\xi_0(t)\D t}\exp\left(- \frac{\tsigma_i^2}{8}\right)$.
and
$$
\tsigma_i^2 = \frac{4\nu^2 C_H^2}{\Hp^2\Delta^2}\left[\frac{(T_i+\Delta)^{1+2\Hp}-\Delta^{1+2\Hp}+T_i^{1+2\Hp}}{1+2\Hp}-2\frac{T_i^{1+\Hp}\Delta^{\Hp}}{1+\Hp}\text{}_2F_1\left(-\Hp,1+\Hp,2+\Hp,-\frac{T_i}{\Delta}\right)\right],
$$
which is the closed-form expression of the variance in Assumption~\ref{ass:Bayer}.
The gradient of the objective function is an important source of information in many optimisation algorithms. 
To compute it, we differentiate the objective function in~\eqref{eq:ObjectFunc} 
with respect to~$\nu$ and~$H$ and apply the chain rule:
$$
\frac{\partial \Ll^\Fk}{\partial \nu}(\nu,H)
 = -\frac{1}{4}\sum_{i=1}^{N} (\Vk_{T_i}-\Fk_i)\Vk_{T_i}\frac{\partial \tsigma_i^2}{\partial\nu}
 = -\frac{1}{2\nu}\sum_{i=1}^{N} (\Vk_{T_i}-\Fk_i)\Vk_{T_i} \tsigma_i^2,
$$
where 
$$
\frac{\partial \tsigma_i^2}{\partial\nu}
 = \frac{8\nu C_H^2}{\Delta^2\Hp^2}\int_0^{T_i} \left((T_i-s+\Delta)^{\Hp}-(T_i-s)^{\Hp}\right)^{2}\D s
 = \frac{2\tsigma_i^2}{\nu}.
 $$
On the other hand,
$\displaystyle \frac{\partial \Ll^\Fk}{\partial H}(\nu,H)
 = -\frac{1}{4}\sum_{i=1}^{N} (\Vk_{T_i}-\Fk_i)\Vk_{T_i}\frac{\partial \tsigma_i^2}{\partial H}$,
with
\begin{align*}
\frac{\partial \tsigma_i^2}{\partial H}
 & = \frac{4\nu^2\frac{\partial C_H^2}{\partial H}\Hp-8\nu^2C_H^2}{\Delta^2\Hp^3}\int_0^{T_i}
 \left((T_i-s+\Delta)^{\Hp}-(T_i-s)^{\Hp}\right)^{2}\D s\\
 & + \frac{8\nu^2C_H^2}{\Delta^2\Hp^2}\int_0^{T_i} \left[(T_i-s+\Delta)^{H+\frac{1}{2}}-(T_i-s)^{\Hp}\right]
\left[(T_i-s+\Delta)^{\Hp}\log(T_i-s+\Delta)-(T_i-s)^{\Hp}\log(T_i-s)\right]\D s\\
 & = \frac{\tsigma_i^2(K_H-2)}{\Hp}\\
 &\quad  + \frac{8\nu^2C_H^2}{\Delta^2\Hp^2}
 \int_0^{T_i}\left[(T_i-s+\Delta)^{\Hp}-(T_i-s)^{\Hp}\right]
\left[(T_i-s+\Delta)^{\Hp}\log(T_i-s+\Delta)-(T_i-s)^{\Hp}\log(T_i-s)\right]\D s,
\end{align*}
where
\begin{align*}
\frac{\partial C_H^2}{\partial H}
 = \frac{2\Gamma(2-\Hp)}{\Gamma(\Hp)\Gamma(1-2\Hm)}
 \Big\{ 1+\left[2\psi(2-\Hp) - \psi(\Hp) - \psi(1-2\Hm)\right]H\Big\}
\end{align*}
and
$K_H=\frac{\Hp}{H}\Big\{1+H\psi\left(2-\Hp\right)-H\left(\psi(\Hp)+\psi(1-2\Hm)\right)\Big\}$,
where $\psi$ is the digamma function.

\subsubsection{Obtaining the initial forward variance curve}\label{sec:InitialFwdCurve}
The initial forward variance curve plays a crucial role in both the Bergomi~\cite{Bergomi} 
and the rough Bergomi models~\cite{BFG15}, since it is a market input. 
In particular it depends on the current term structure of variance swaps. 
Even if variance swaps are not traded in standard exchange markets, they are actively traded over-the-counter (OTC). 
This in turn means that there is no observable data and we must establish a valuation method for variance swaps. 
The celebrated static replication formula by Carr and Madan~\cite{Carr-Madan},
applicable here since the underlying process is a continuous semimartingale, 
allows us to price any variance swap. 
Nevertheless, out-of-money Call and Put option prices are needed for all possible strikes. 
Since this information is not available in practice, one can either adopt 
a model-free valuation formula or directly propose a parameterisation for the implied volatility surface.
The first approach involves a discretisation of the static replication formula,
which is how the VIX index is computed by the  Chicago Board Options Exchange. 
It is not easy, however, to extend this computation for large maturities (VIX is a $30$-day ahead index), 
where liquidity of options may play a major role. 
On the other hand, the second approach allows to calibrate a model 
using available data and additionally allows to interpolate / extrapolate 
available option data to all strikes and maturities.
In this work we follow the latter approach, 
with the eSSVI parameterisation~\cite{HM} for the implied volatility surface, 
which is a refinement of the SSVI parametrisation introduced in~\cite{Gatheral-Jacquier}:
\begin{equation}\label{eq:eSSVI}
\sigma_{\BS}^2(t,k)t = w(t, k)
 := \frac{\theta_t}{2} \left\{1+ \rho(\theta_t) \varphi(\theta_t) k + 
 \sqrt{ \left(\varphi(\theta_t) k+\rho(\theta_t)\right)^2+ 1-\rho(\theta_t)^2}\right\},
\end{equation}
where $\theta_t$ is the observed ATM variance curve,
and where the shape function~$\varphi(\cdot)$ takes the form
$\varphi(\theta) = \eta \theta^{-\lambda}(1+\theta)^{\lambda-1}$.
For the correlation parameter~$\rho(\cdot)$ we restrict it to the following functional form:
\begin{equation}\label{rho}
\rho(\theta) = (A-C)\E^{-B\theta} + C,
\qquad\text{for } (A, C) \in (-1,1)^2, B\geq 0,
\end{equation}
ensuring that $|\rho(\cdot)|\leq 1$.
We shall indistinctly refer to the total implied variance by~$\sigma_{\BS}^2(\cdot)t$ or~$w(\cdot)$, 
where~$\sigma_{\BS}(\cdot)$ represents the implied volatility. 
Gatheral and Jacquier~\cite{Gatheral-Jacquier} found (sufficient and almost necessary) conditions 
on the parameters $\rho$, $\varphi(\cdot)$, $\eta$ and~$\lambda$ preventing arbitrage. 
In the eSSVI formulation, the correlation has a term structure~\eqref{rho}, 
and care must be taken in order to preclude arbitrage. 
Concretely, following ~\cite{HM}, the restriction
\begin{equation}\label{eSSVICSCond}
|\theta\partial_{\theta}(\rho(\theta)) + \rho(\theta)\gamma|\leq \gamma,
\end{equation}
where
$\gamma := \partial_{\theta}(\theta\varphi(\theta)) / \varphi(\theta)$,
is a necessary condition to preclude calendar spread arbitrage.
To prevent butterfly arbitrage, exactly as in the SSVI parameterisation, it is sufficient~\cite[Theorem 4.2]{Gatheral-Jacquier} to check that the inequality 
$\theta_t \varphi^2(\theta_t) (1+|\rho(\theta_{t})|)\leq 4$ holds for all maturity~$t$.
In this parameterisation, variance swaps can be computed in closed form, as proved by Martini~\cite{Martini},
based on earlier works by Gatheral~\cite{Gatheral} and Fukasawa~\cite{Fukasawa}:
\begin{proposition}\label{prop:swapsSSVI}
The fair strike (in total variance) of a variance swap in the eSSVI model reads
$$
\sigma_0(t)^2 t := -2\EE\log\left(\frac{S_t}{S_0}\right) = \frac{b_{t}^2+2a_{t}(c_{t}+\theta_t)}{2a_{t}^2},
$$
where
$\chi_t := \frac{1}{4}[1-\rho(\theta_t)^2] \theta_t \varphi(\theta_t)$, and 
$$
a_{t} = 1+\frac{\theta_t \varphi(\theta_t)}{2}
\left(\rho(\theta_t)- \frac{\chi_t}{2}\right),
\qquad\qquad
b_{t} = \theta_t \varphi(\theta_t)\left[\chi_t-\rho(\theta_t)\right],
\qquad\qquad
c_{t} =\theta_t \varphi(\theta_t)\chi_t.
$$
\end{proposition}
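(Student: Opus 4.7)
The plan is to start from the Carr--Madan static replication identity, valid because $S$ is a continuous semimartingale under~$\QQ$,
$$
\sigma_0(t)^2 t = -2\EE\log(S_t/S_0) = 2\int_0^{S_0}\frac{P(K,t)}{K^2}\D K + 2\int_{S_0}^\infty\frac{C(K,t)}{K^2}\D K,
$$
and to rewrite it as a single integral over log--moneyness $k=\log(K/S_0)$ by substituting the Black--Scholes formula with total implied variance $w(t,k)$. This produces an expression in $\Phi(d_+(k))$ and $\Phi(d_-(k))$, where $d_\pm(k) = -k/\sqrt{w(t,k)} \pm \sqrt{w(t,k)}/2$. The decay conditions required for the resulting boundary terms to vanish follow from the SSVI/eSSVI no-arbitrage constraints (in particular the butterfly condition $\theta_t\varphi^2(\theta_t)(1+|\rho|)\leq 4$ recalled just before the proposition), which are exactly Roger Lee's moment conditions on the wings.

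The next step is to invoke Fukasawa's normalizing transform: under these conditions $d_-(\cdot)$ is strictly decreasing, so the law of $X_t := \log(S_t/S_0)$ is characterised by $-d_-(X_t)\sim\Nn(0,1)$ under~$\QQ$, and the risk-neutral density of~$X_t$ is $p(k) = \phi(d_-(k))\,|d_-'(k)|$. This identifies the fair strike as
$$
\sigma_0(t)^2 t = -2\int_{-\infty}^\infty k\,\phi(d_-(k))\,|d_-'(k)|\,\D k,
$$
an expression that depends only on the implied variance surface $w(t,\cdot)$, which is precisely the Gatheral--Fukasawa viewpoint the statement credits.

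Now I would specialise to the eSSVI form and perform the affine change of variable $u = \varphi(\theta_t)\,k + \rho(\theta_t)$, turning $w$ into $\frac{\theta_t}{2}\bigl\{1 - \rho^2 + \rho u + \sqrt{u^2 + 1-\rho^2}\bigr\}$. The quantity under the square root is linearized by the hyperbolic substitution $u = \sqrt{1-\rho^2}\sinh(v)$, after which both $d_\pm(k)$ become rational functions of $\sinh(v)$ and $\cosh(v)$. The integrand then splits into pieces that can be integrated explicitly (elementary integrals of $\phi(\cdot)$ against affine combinations of $\sinh, \cosh$), and the definition $\chi_t = \tfrac{1}{4}(1-\rho^2)\theta_t\varphi(\theta_t)$ arises naturally as the product $\tfrac{\theta_t\varphi(\theta_t)}{2}\cdot\tfrac{\sqrt{1-\rho^2}}{2}\cdot\sqrt{1-\rho^2}$ appearing in this linearisation.

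The main obstacle is the explicit evaluation of the resulting integral: one must track through the hyperbolic substitution, collect coefficients of the independent functions $1$, $\sinh(v)$, $\cosh(v)$ into the compact combinations $a_t, b_t, c_t$, and verify that everything telescopes into the stated rational expression $(b_t^2+2a_t(c_t+\theta_t))/(2a_t^2)$. Care will also be required at the boundary $v\to\pm\infty$ to justify the vanishing of boundary contributions from the integration by parts; this uses the growth constraints on $w$ inherited from the butterfly condition. Given the length of this calculation, and following the strategy already established in the text, the cleanest exposition is to cite Martini's derivation (which builds on Gatheral and Fukasawa) for the SSVI case and then observe that the only new feature in the eSSVI extension is the $\theta_t$-dependence of $\rho$ and $\varphi$, which commutes with the computation since the integration is carried out at each fixed $t$.
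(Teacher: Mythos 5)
The paper offers no proof of this proposition at all: the formula is imported verbatim from Martini \cite{Martini}, with the sentence preceding the statement (``as proved by Martini, based on earlier works by Gatheral and Fukasawa'') serving as the only justification. Your overall plan --- Carr--Madan replication, a change of variables in the spirit of Gatheral and Fukasawa, explicit evaluation for the eSSVI smile, and ultimately a citation to Martini for the heavy lifting --- is therefore consistent with what the paper actually does, and the strategy is the right one in outline.

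However, the sketch contains a genuine error at its pivotal step. Strict monotonicity of $d_-$ does \emph{not} imply that $-d_-(X_t)$ is standard Gaussian, and the risk-neutral density of $X_t$ is \emph{not} $\phi(d_-(k))\,|d_-'(k)|$: by Breeden--Litzenberger it equals $\frac{g(k)}{\sqrt{w(t,k)}}\,\phi(d_-(k))$ with $g(k)=\bigl(1-\frac{kw'}{2w}\bigr)^2-\frac{(w')^2}{4}\bigl(\frac{1}{w}+\frac{1}{4}\bigr)+\frac{w''}{2}$ (this is the density appearing in the Gatheral--Jacquier butterfly condition), which depends on $w''$ whereas $|d_-'|$ does not. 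Consequently your displayed identity
$$
\sigma_0(t)^2\,t \;=\; -2\int_{\RR}k\,\phi(d_-(k))\,|d_-'(k)|\,\D k
$$
is false. Indeed, with $z=-d_-(k)$ one has $k(z)=z\sqrt{w}-w/2$, so the right-hand side equals $\int_{\RR}w(t,k(z))\phi(z)\,\D z-2\int_{\RR}z\sqrt{w(t,k(z))}\,\phi(z)\,\D z$; the first term is the \emph{correct} Gatheral--Fukasawa representation of the variance swap strike (obtained by substituting the Black--Scholes formula into the Carr--Madan integral and integrating by parts, not by identifying the law of $X_t$), and the second term does not vanish for a non-flat smile, so your starting integral would not reproduce Martini's formula. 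Finally, the actual content of the proposition --- that for eSSVI the integral collapses to $(b_t^2+2a_t(c_t+\theta_t))/(2a_t^2)$ --- is asserted to ``telescope'' under a hyperbolic substitution but never verified; as written, the proposal neither carries out Martini's computation nor sets up its correct starting point.
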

Recalling the relation between variance swaps and the forward variance curve, we have
$$\xi_0(t)=\frac{\D}{\D t}\left(t \sigma^2_0(t)\right)=\sigma^2_0(t)+t\frac{d}{\D t}\sigma^2_0(t).$$
\begin{remark}
In order to interpolate/extrapolate the eSSVI, it is necessary to also interpolate/extrapolate~$\theta_t$ 
for all~$t$. 
However, $\theta_t$ is only observed on a discrete set of maturity dates,
and consequently, cubic splines are used to interpolate/extrapolate all other maturities.
\end{remark}


\subsubsection{Calibration algorithm and numerical results}
We first calibrate the eSSVI parameterisation~\eqref{eq:eSSVI} on the SPX implied volatility surface
on the~$4$th of December $2015$.
Figure~\ref{SSVI_calibr} shows the fit for the shortest, medium and longest maturities available in the data set. 
For short maturities the eSSVI is not able to fully capture the volatility smile, 
however as maturity increases the fit improves remarkably.
\begin{figure}[h]
\centering
\includegraphics[height=10cm, width=10cm]{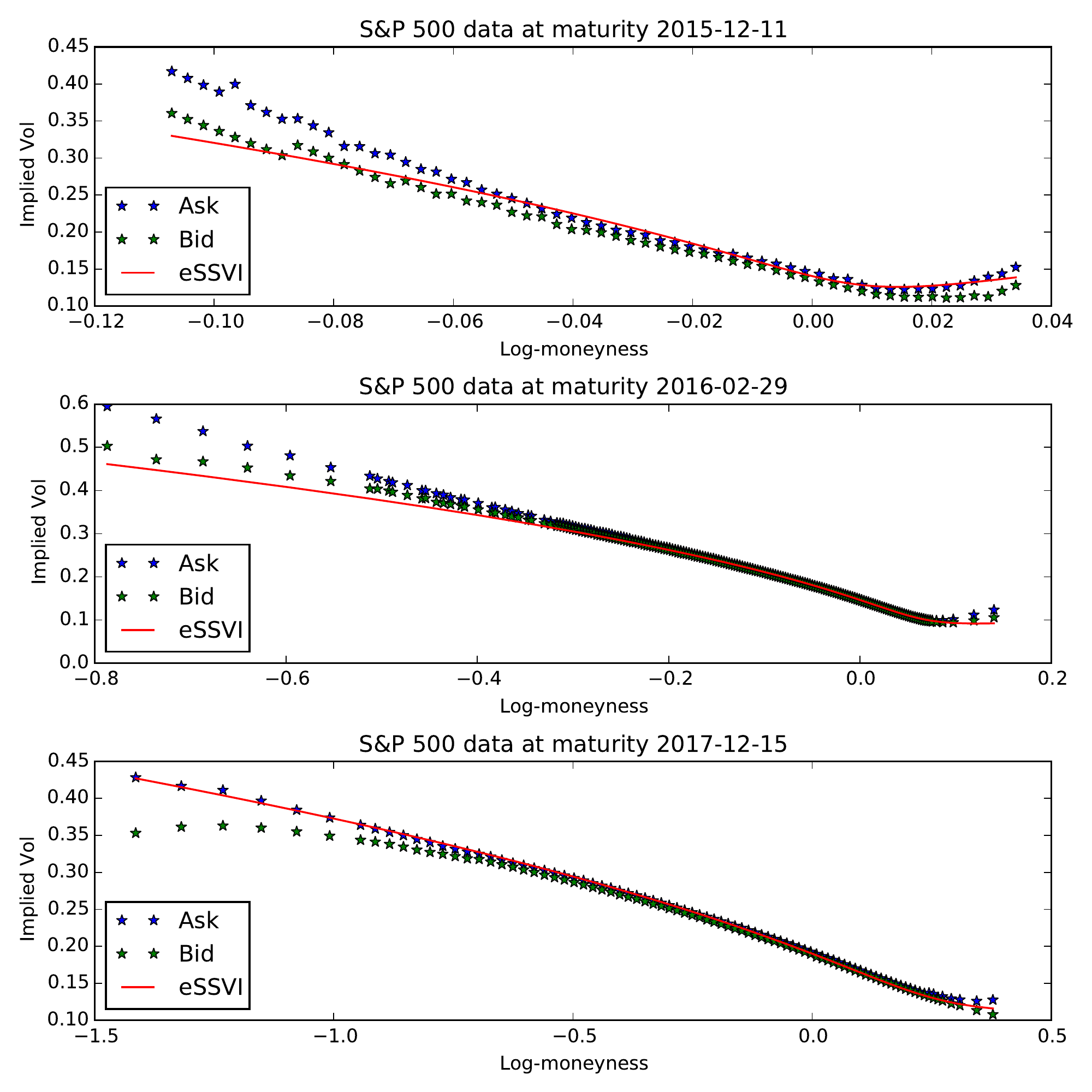}
\caption{eSSVI calibration results on 4/12/2015 using traded SPX options.}
\label{SSVI_calibr}
\end{figure}
For VIX Futures, the calibration algorithm reads as follows:
\begin{algorithm}[VIX Futures calibration algorithm in the rough Bergomi model]\label{algo:VIXFutures}\ 
\begin{enumerate}[(i)]
\item Calibrate eSSVI to available SPX option data;
\item compute the variance swap term structure $(\sigma_0(t)^2)_{t\geq 0}$ using Proposition~\ref{prop:swapsSSVI};
\item extract the initial forward variance curve, $\xi_0(\cdot)$ via
$\xi_0(t)\approx \sigma^2_0(t) + \frac{\sigma^2_0(t+\eps)-\sigma^2_0(t-\eps)}{2\eps}t$
(with $\eps = 1E-8$);
\item minimise (over $\nu,H$) the objective function in~\eqref{eq:ObjectFunc}.
\end{enumerate}
\end{algorithm}

Figures~\ref{VIX_calibr}-\ref{VIX_calibr3} suggest that the model fits very well the observed VIX Futures term structure for different dates.
Moreover, we notice that both the model and the observed data are qualitatively equal in terms of convexity/concavity. 
In the rough Bergomi model this information is obtained from option prices through~$\xi_0$, 
which suggests a correspondence in the market between VIX futures and SPX options. 
However, we also observe that in all three cases the error is greater for short maturities,
mimicking the calibration limits of eSSVI for short maturities, as detailed in Section~\ref{sec:InitialFwdCurve}.
\begin{figure}[h]
\centering
\includegraphics[scale=0.2]{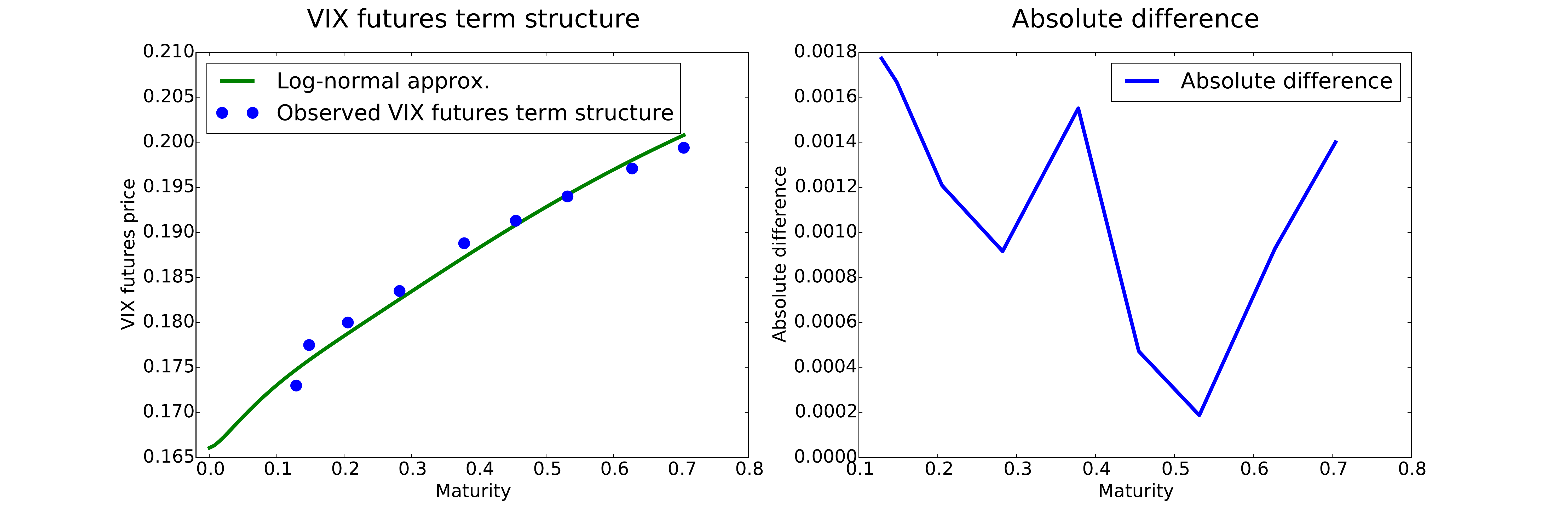}
\caption{VIX Futures calibration on 4/12/2015.
Optimal parameters: $(H, \nu) =  (0.09237, 1.004)$.}
\label{VIX_calibr}
\end{figure}

\begin{figure}[h]
\centering
\includegraphics[scale=0.2]{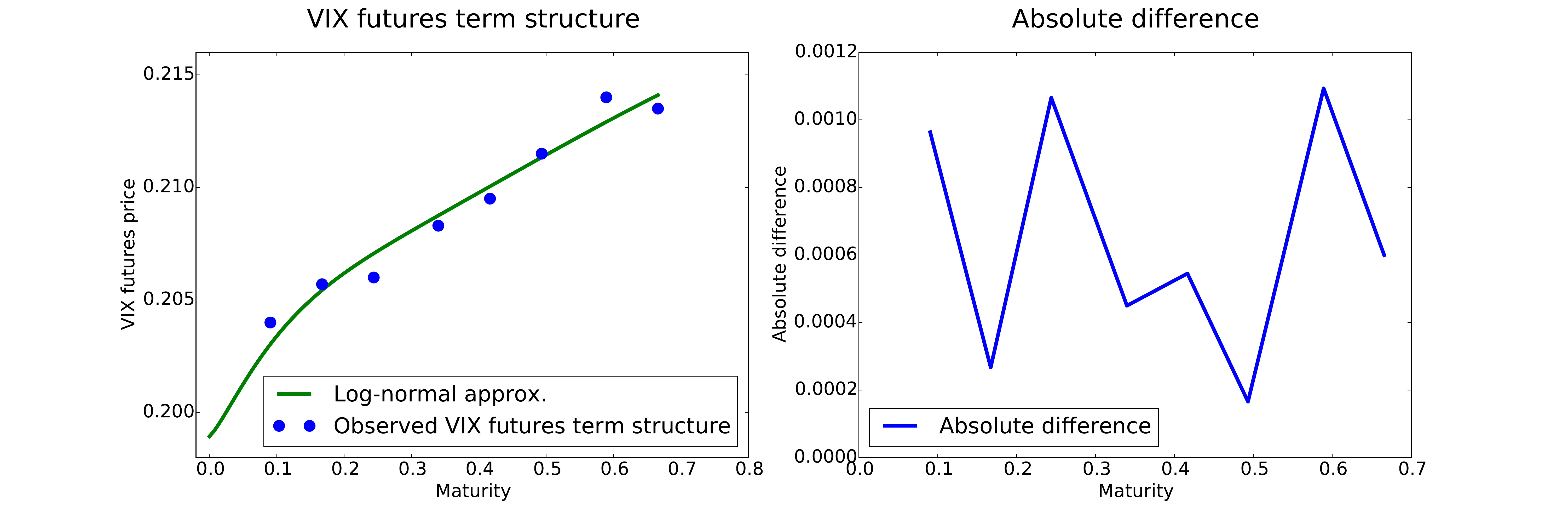}
\caption{VIX Futures calibration on 22/2/2016.
Optimal parameters: $(H, \nu) =  (0.10093, 1.00282)$.}
\label{VIX_calibr1}
\end{figure}

\begin{figure}[h]
\centering
\includegraphics[scale=0.2]{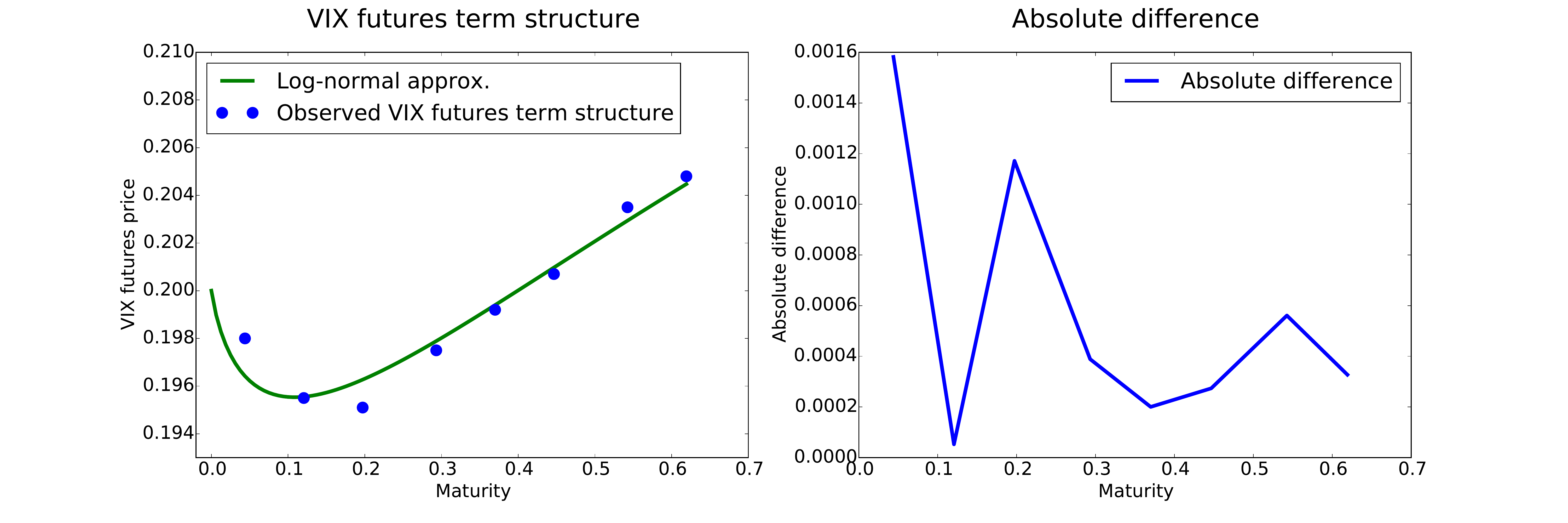}
\caption{VIX Futures calibration on 4/1/2016.
Optimal parameters: $(H, \nu) =  (0.0509, 1.2937)$.}
\label{VIX_calibr3}
\end{figure}

\begin{remark}
The reader should recall the importance of the initial forward variance curve $\left(\xi_0(t)\right)_{t\geq 0}$ 
in the VIX Futures process, since $(\Vk_t)_{t\geq 0}$ depends on the whole path of~$\xi_0$ up to time~$t$. Therefore, even if~$\xi_0$ is only misspecified for short maturities (as is the case of the eSSVI), 
this affects the whole term structure of the VIX process 
(for details we refer the reader to Section~\ref{sec:VIXFutures}). 
Therefore, an improved~$\xi_0$ estimation would not only increase the accuracy of the model 
for short maturities, but also for the whole term structure.
\end{remark}

\begin{remark}
Our calibration involves two different data sets, SPX options and VIX Futures:
the Vanilla quotes are extracted from the CBOE delayed option quotes page\footnote{CBOE delayed option quotes: http://www.cboe.com/delayedquote/quotetable.aspx} and the VIX Futures from the CBOE VIX futures historical data page\footnote{CBOE VIX futures historical data: http://cfe.cboe.com/data/historicaldata.aspx}. 
We perform an aggregation of the different Future quotes, since they are quoted on a Future per Future basis,
and check the consistency between the two data sets by comparing the left end extrapolation of 
the Futures curve with the theoretical VIX computed from option prices.
\end{remark}

\section{From VIX Futures to SPX options}\label{sec:JointCalib}
In this final chapter we assess whether the Hurst parameter~$H$ 
obtained through the VIX Futures calibration algorithm is consistent with SPX options. 
For this purpose, we calibrate the rough Bergomi model to SPX option data 
by fixing the parameter~$H$ and letting the algorithm calibrate~$\nu$ and~$\rho$. 
One of the main reasons to fix~$H$ is that the hybrid scheme introduced 
in Section~\ref{Hybrid simulation scheme} remarkably reduces its complexity to~$\Oo(n)$, 
since the $\Oo(n\log n)$ complexity of the Volterra is computed only once and reused afterwards. 
Therefore, by fixing~$H$ the pricing scheme is much faster when several valuations are performed, 
as is the case of a calibration algorithm.

\subsection{Pricing in the rough Bergomi model}\label{Pricing under rBergomi}
We present a pricing scheme, where the Volterra process~$\Vv$ is simulated using a hybrid scheme, 
while a standard Euler scheme generates the paths of the stock process:

\begin{algorithm}[Simulation of the rough Bergomi model]\label{alg:rBergomiSimul}
Consider the grid $\Tt:=\{t_i\}_{i=0,\ldots,n_T}$, and fix $\kappa\geq 1$. 
\begin{enumerate}[(i)]
\item Simulate the Volterra process~$\Vv$ on the grid~$\Tt$ using the hybrid scheme;
\item simulate the variance process as 
$V_t = \xi_0(t)\Ee(2\nu C_{H}\Vv_t)$, for $t\in\Tt$ and where $([\Vv]_t)_{t\geq 0}$ 
is given in~\eqref{eq: quadratic_variation};
\item extract the path of the Brownian motion~$Z$ driving~$\Vv$:
\begin{equation*}
\begin{array}{rll}
Z_{t_{i}}& =Z_{t_{i-1}}+n^{\Hm}\left(\Vv(t_{i}) - \Vv(t_{i-1})\right),& \text{for }i=1,\ldots,\kappa,\\
Z_{t_{i}}&=Z_{t_{i-1}}+\overline{Z}_{i-1} ,   & \text{for }i>\kappa;
  \end{array}
\end{equation*}
compute $\{Z^{\perp}\}_{i=0}^{n_T-1}$ where $Z^{\perp}_i \equalDistrib \Nn(0,1/n_T)$ is an independent standard Gaussian sample;
\item correlate the two Brownian motions via
$W_{t_{i}}-W_{t_{i-1}}
 = \rho \overline{Z}_{i-1} + \sqrt{1-\rho^2}\overline{Z}_{i-1}^{\perp};$
\item simulate $S_{t_{i}}=\exp(X_{t_i})$ using a forward Euler scheme:
$$
X_{t_{i+1}} = X_{t_{i}}-\frac{1}{2}V_{t_{i}}(t_{i+1}-t_{i})+\sqrt{V_{t_{i}}} \left(W_{t_{i+1}}-W_{t_{i}}\right), \qquad \text{for }  i=0,\ldots,n_T-1;
$$
\item compute the expectation by averaging the payoff over all terminal values of each path.
\end{enumerate}
\end{algorithm}


\subsection{Calibration of SPX options via VIX Futures}
We first follow the calibration algorithm~\ref{algo:VIXFutures} to obtain~$H$ and~$\xi_0$,
and we then aim at minimising, over~$(\nu,\rho)$, the objective function
\begin{equation}\label{eq:JointObjecFunc}
\Ll^{\Cr}(\nu,\rho) := \sum_{j=1}^{L}\sum_{i=1}^{N} (\Cr_{T_{i,j}}-\Cr^\mathrm{obs}_{i,j})^2,
\end{equation}
where $\Cr_{T_{i,j}}$ is the Call price given by the rough Bergomi model, 
computed using the scheme introduced in Section~\ref{Pricing under rBergomi}, 
with maturity~$T_i$ and strike~$K^{(j)}$. 
On the other hand, $(\Cr^\mathrm{obs}_{i,j})_{i,j}$ 
is the set observed Call prices in the time grid $T_1<\ldots<T_N$ and strike grid $K^{(1)}<\ldots<K^{(L)}$.
In order to optimise the calibration algorithm, we first compute the Volterra process~$\Vv$ ,
which will then be used in a forward Euler simulation at each calibration step:
\begin{algorithm}[Calibration algorithm for SPX options via VIX Futures]\ 
\begin{enumerate}[(i)]
\item Calibrate $H$ and $\xi_0$ using the VIX Futures;
\item compute $M$ paths of the Volterra process, 
$\{\Vv^{(u)}\}_{u=1}^{M}$ and extract the Brownian motions~$\{Z^{(u)}\}_{u=1}^M$ driving each process. Also, compute independent Brownian motions $\{Z^{\perp(u)}\}_{u=1}^M$;
\item evaluate the Call prices in each calibration step:
\begin{equation*}
\begin{array}{rll}
V_t^{(u)} & = \xi_0(t)\Ee\left(2\nu C_{H}\Vv_t^{(u)}\right), & u=1,\ldots,M,\\
W^{(u)} & = \rho Z^{(u)}+\sqrt{1-\rho^2}Z^{\perp(u)}, &  u=1,\ldots,M,\\
S^{(u)}_{t+\Delta} & = S_t^{(u)}+S_t^{(u)}\sqrt{v^{(u)}_t} \left(W^{(u)}_{t+\Delta}-W^{(u)}_{t}\right), & u=1,\ldots,M;
\end{array}
\end{equation*}
\item compute the Call price for each available maturity $\{T_1,\ldots,T_N\}$ 
and set of strikes $\{K^{(1)},\ldots,K^{(L)}\}$:
$$
\Cr_{T_{i},j}=\frac{1}{M}\sum_{u=1}^{M}(S^{(u)}_{T_i}-K^{(j)})_+,
\quad\text{for }i=1,\ldots,N \text{ and }j=1,\ldots,L;
$$
\item minimise over $(\nu,\rho)$ the objective function~$\Ll^{\Cr}(\nu,\rho)$ in~\eqref{eq:JointObjecFunc}.
\end{enumerate}
\end{algorithm}

\begin{remark}
Item~(v) in the algorithm above may change the optimal values for~$\nu$, which was initially calibrated in~(i) 
from the VIX Futures.
Backtesting however shows that the calibration in~(i) is not really affected by this.
\end{remark}

\subsection{Results}
We calibrate the model on December $4$, $2015$,
fixing $H=0.09237$ obtained previously through VIX, 
and plot the fit in Figure~\ref{options}. 
The model is not fully consistent for short maturities, which may follow from the inability 
of~$\xi_0$ to fully capture the smiles for these maturities, 
but the fit greatly improves with maturity. 
Interestingly, we observe a~$20\%$ difference between the the parameter~$\nu$ 
obtained through VIX calibration and the one obtained through SPX. 
This suggests that the volatility of volatility in the SPX market is~$20\%$ higher 
when compared to VIX, revealing potential data inconsistencies (arbitrage?).
Nevertheless, we emphasise the importance of an accurate~$\xi_0$ curve 
to improve the fit to SPX and to provide an efficient joint calibration.
\begin{figure}[h]
\includegraphics[scale=0.4]{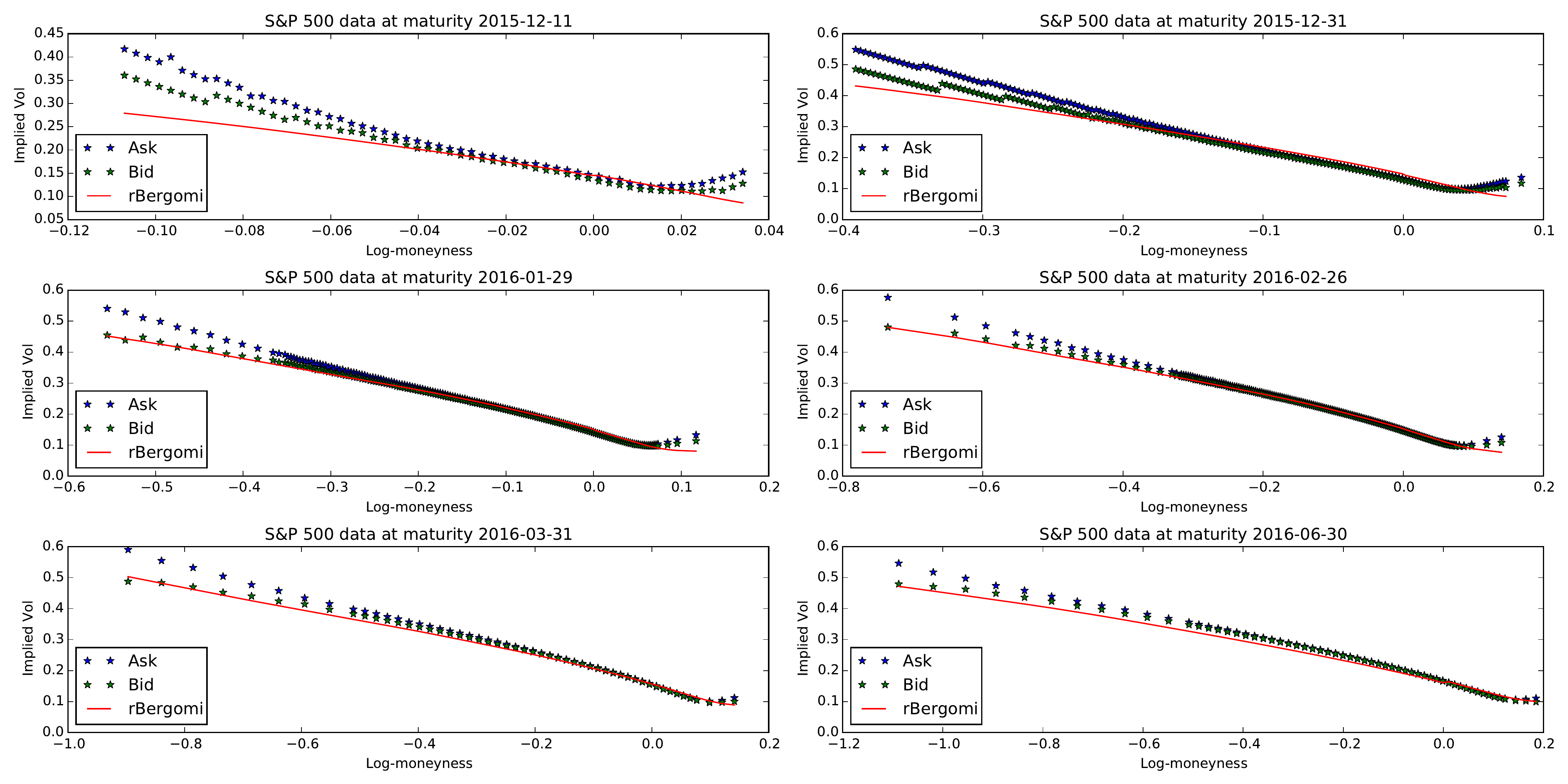}
\caption{Calibration of SPX smiles on 4/12/2015.
Calibrated parameters: $(\nu, \rho) = (1.19, -0.999)$.}
\label{options}
\end{figure}

\section*{Conclusion}
Following the path set by Bayer, Friz and Gatheral~\cite{BFG15},
we developed here a relatively fast algorithm to calibrate VIX Futures and the VIX smile,
consistently with the SPX smile, in the rough Bergomi model.
The clear strength of this model is that only a few parameters are needed, 
making the (re)calibration robust and stable.
From a trader's point of view, we highlight some potential market discrepancy between the VIX and the SPX,
and leave a refined analysis thereof for future research.

\appendix
\section{The hybrid scheme}\label{sec:HybridScheme}
We briefly recall the hybrid scheme developed in~\cite{BLP15}.
Following the notation in Definition~\ref{def:semistationary}, 
we consider a (truncated) Brownian semistationary process $\Bb(\alpha, W)$, 
and introduce the truncation parameter $\kappa\in\NN$.
On an equidistant grid $\Tt:=\{t_i=i/n\}_{i=0,\ldots,n_T}$, with $n_T:=\lfloor nT\rfloor$,
for $n\geq 2$,
under Definition~\ref{def:semistationary}, the hybrid scheme for the BSS process~$\Bb$ is approximated by
$\Bb_n(t_{i}) = \widetilde{\Bb}_n(i) + \widehat{\Bb}_n(i)$ with
$$
\widetilde{\Bb}_n(i)
 = \sum_{k=1}^{i\wedge \kappa} L_g\left(\frac{k}{n}\right)\sigma\left(\frac{i-k}{n}\right)\overline{W}_{i-k,k}
 \qquad\text{and}\qquad
\widehat{\Bb}_n(i)
 = \sum_{k=\kappa+1}^{i}g\left(\frac{b^*_k}{n}\right)\sigma\left(\frac{i-k}{n}\right)\overline{W}_{i-k},
$$
with $L_g$ introduced in Definition~\ref{def:semistationary}, and where
\begin{equation}\label{eq:bStar}
\overline{W}_{i} := \int_{t_{i}}^{t_{i+1}}\D W_s,
\quad
\overline{W}_{i,k} := \int_{t_{i}}^{t_{i+1}}\left(t_{i+k}-s\right)^{\alpha}\D W_s,
\quad
b_k^*=\left(\frac{k^{\alpha+1}-(k-1)^{\alpha+1}}{\alpha +1}\right)^{1/\alpha},
\text{ for }k\geq\kappa+1.
\end{equation}
For any $i, k$, the random variables $\overline{W}_i$ and $\overline{W}_{i,k}$ are centred Gaussian 
with the following covariance structure:
\begin{align*}
\EE\left(\overline{W}_{i,k}\overline{W}_{i}\right) & = \frac{k^{\alpha+1}-(k-1)^{\alpha+1}}{n^{\alpha+1}\alpha+1},
\qquad \text{and}\qquad 
\EE\left(\overline{W}_{i,k}\overline{W}_{j}\right)=0, \quad\text{for }k\neq j,\\
\EE\left(\overline{W}_{i,k}\overline{W}_{i,j}\right) & = \int_{0}^{1/n}\left(\frac{k}{n}-u\right)^{\alpha}\left(\frac{j}{n}-u\right)^\alpha \D u,\text{ for }k\neq j,
\quad \VV\left(\overline{W}_{i,k}\right) = \frac{k^{2\alpha+1}-(k-1)^{2\alpha+1}}{n^{2\alpha+1}2\alpha+1},
\quad 
\VV\left(\overline{W}_{i}\right) = \frac{1}{n}.
\end{align*}


\end{document}